\newtheorem{theorem}{Theorem}[section]
\newtheorem{lemma}[]{Lemma}[section]
\newtheorem{proposition}[]{Proposition}[section]
\newtheorem{remark}[]{Remark}[section]
\newtheorem{definition}[]{Definition}[section]
\newtheorem{assumption}[]{Assumption}[section]
\DeclareMathOperator*{\argsup}{arg\,sup}
\DeclareMathOperator{\logit}{\text{logit}}
\begin{document}


\title{Integer-valued autoregressive models with   survival probability driven by a stochastic recurrence equation \footnotemark \footnotetext{ Email address: \texttt{gorgi@stat.unipd.it}
}}

\author[]{P. Gorgi}

\affil[]{University of Padua, Italy\\VU University Amsterdam, The Netherlands}

{\let\newpage\relax\maketitle}

\begin{abstract}
A new class of integer-valued autoregressive  models with dynamic survival probability is proposed. The peculiarity of this class of models lies on the specification of the survival probability through a stochastic recurrence equation. The estimation of the model can be performed by maximum likelihood and the consistency of the estimator is proved in a misspecified model setting.  The flexibility of the proposed  specification is illustrated in a simulation study. An application to a  time series of crime reports is presented. The results show how the   dynamic survival probability  can  enhance both   in-sample and  out-of-sample performances of integer-valued autoregressive models. 
\end{abstract}

\emph{Key words:} Count time series, INAR models, score-driven models, time-varying parameters. \\



\section{Introduction} 

Over the last few years, there has been an increasing interest in modeling and forecasting integer-valued time series. The reason being that  many  observed time series are not continuous and the use of specific models  to take this into account allows us to better describe  time series behaviors.  One of the most popular models for  time series of counts is the  Integer-valued Autoregressive (INAR) model of \cite{ALOSH1987} and \cite{MCK1988}. Its  specification is based on the thinning operator `$\circ$'  of \cite{steutel1979discrete}. For a given $N \in \mathbb{N}$ and $\alpha \in (0,1)$, the thinning operator is  defined to satisfy $\alpha \circ N =\sum_{i=1}^{N}x_i$, where $\{x_i\}_{i=1}^N$ is a sequence of  independent Bernulli random variables with success probability $\alpha$. 
The thinning operator  enables the specification of integer-valued time series models in an autoregressive fashion. In fact,  INAR models can be seen as a discrete response version of the well known linear autoregressive  model. The  first order  INAR model is described by the following equation
\begin{equation}\label{inar}
y_t=\alpha \circ y_{t-1} + \varepsilon_t, \; t \in \mathbb{Z},
 \end{equation}
 where $\{\varepsilon_t\}_{t \in \mathbb{Z}}$ is an i.i.d.~sequence of integer-valued random variables. An appealing feature of the INAR model in (\ref{inar}) is its well known interpretation as a death-birth process. From this interpretation, the coefficient $\alpha$ is also  called the survival probability.  As  in the original formulation of \cite{ALOSH1987} and \cite{MCK1988}, the error term $\varepsilon_t$ is typically assumed to be   Poisson distributed.   Other distributions have also been considered in the literature   as the Poisson imposes equidispersion and this is can be restrictive in practice, see \cite{al1992first} and \cite{jazi2012}. Besides the distribution of the error term, the  INAR specification in (\ref{inar})  has been generalized in several directions. Among others, \cite{alzaid1990} and \cite{jin1991} extended the first order INAR model to a general order $p$, \cite{kim2008non} considered a signed thinning operator to handle nonstationary series and \cite{pedeli2011bivariate} introduced a bivariate INAR model.



Real time series data often exhibit changing dynamic behaviors.  As a result, employing more flexible  specifications for the dynamic component of the model  can provide a better description of  the underlying behavior of the time series  and  produce better forecasts. The contribution of this paper is in this direction: we introduce a new class of INAR models with time-varying survival probability. The peculiarity of our approach is that the dynamics of the INAR coefficient is specified through a Stochastic Recurrence Equation (SRE) that is driven by the score of the predictive  likelihood. This method allows  the survival probability  to be updated at each time period using the information provided by past elements of the series. The use of the score to update time-varying parameters has been recently  proposed by \cite{Creal2013} and \cite{H2013}. Since then, their Generalized Autoregressive Score (GAS) framework has been successfully employed to develop dynamic models  in  econometrics and time series analysis, see for instance    \cite{salvatierra2015dynamic}, \cite{HL2014} and \cite{creal2012dynamic}.  It is also worth mentioning that  many well-known observation-driven models turn out to be GAS models. Examples include the   GARCH model of \cite{engle1982} and \cite{bol1986} and, in the context of integer-valued time series, the Poisson autoregressive model of \cite{Davis2003}.  For a more detailed discussion see \cite{Creal2013}.

In the literature, time variation of the INAR survival probability has also  been considered by    \cite{Zheng2007212} and \cite{zheng2008first}.  In  \cite{Zheng2007212}  the survival probability is specified  as  a sequence of i.i.d.~random variables. This approach leads to a more flexible class of conditional distributions  but,  because of the i.i.d.~assumption,  it does not provide a dynamic specification of the INAR coefficient.   \cite{zheng2008first} allows the INAR coefficient  to depend on past observations. Their method introduces a dynamic structure and the survival probability is updated using past information as in our approach. However, as we shall see in Section \ref{section3}, their  specification is not able  to properly  model  smooth changes  of the survival probability.

The  INAR model we propose  in this paper should not be interpreted as a Data Generating Processes (DGP) but as a filter to approximate the distribution of  a more complex and unknown DGP. The reasoning behind this interpretation is provided by  the work of \cite{blasq2015}. In particular, \cite{blasq2015} show that score-driven time-varying parameters should be employed in a misspecified model setting as they are optimal in reducing the Kullback-Leibler (KL) divergence with respect to an  unknown  true DGP. In this direction, we  illustrate the flexibility of the proposed dynamic specification for the INAR coefficient  by means of a simulation study in a misspecified framework. The results illustrate how the model is able to capture complex dynamic behaviors and well approximate the true distribution of different  DGPs.  Furthermore, we derive some statistical properties of the Maximum Likelihood (ML) estimator: we prove the its consistency  in a misspecified setting and  show that also the conditional predictive probability mass function (pmf) can be consistently estimated through a  plug-in estimator. In particular, the plug-in pmf estimator is shown to converge to a pseudo-true conditional pmf that has the interpretation of minimizing on average the KL divergence with  the true conditional pmf of the DGP. These results are useful not only to ensure    the reliability of inference but also forecasting. Finally, the practical usefulness of the  proposed model  is illustrated thorough  an application to a  real time series dataset of   crime reports. The results are promising and show how the  dynamic survival probability can   enhance both in-sample and  out-of-sample performances of INAR models.

The paper is structured as follows. Section \ref{section1} introduces the class of models. Section \ref{section2} discusses the consistency of ML estimation. Section \ref{section3}  presents the Monte Carlo simulation experiments. Section \ref{section4}  illustrates the empirical application.  Section \ref{section5} concludes.

\section{INAR models with score-driven coefficient} 
\label{section1}

\subsection{The class of models} 
In this section, we extend the class of  INAR models  in (\ref{inar})  by  allowing the survival probability $\alpha$ to change over time. The dynamics  of the time-varying coefficient  $\alpha_t$ is specified on the basis of the score framework of \cite{Creal2013} and \cite{H2013}. The GAS-INAR  model is described by the following equations
\begin{align}
y_t=& \alpha_t \circ y_{t-1} + \varepsilon_t, \label{m1}\\
  \logit\alpha_{t+1} =&\omega+\beta \logit\alpha_{t} + \tau s_t,
\label{m2}
 \end{align}
where  $\{\varepsilon_t\}_{t\in\mathbb{Z}}$ is an i.i.d.~sequence of random variables with  pmf $p_e(x,\xi)$ for $x \in \mathbb{N}$, $\xi \in \Xi\subseteq\mathbb{R}^k$,   the vector $\theta=(\omega, \beta, \tau, \xi)^T$ is a $k+3$ dimensional parameter vector to be estimated  and $s_t=s_t(\alpha_{t},\xi)$  denotes the score of the predictive log-likelihood $\partial \log p(y_t|\alpha_t, y_{t-1}, \xi)/\partial \logit\alpha_t$. Note that throughout the paper we consider the convention that the set $\mathbb{N}$  includes also zero. The functional form of the predictive likelihood  $p(y_t|\alpha_t, y_{t-1}, \xi)$ can be obtained as the convolution between the conditional pmf of $\alpha_t \circ y_{t-1}$ and the pmf of the error term $\varepsilon_t$, i.e.
$$p(y_t|\alpha_t, y_{t-1}, \xi)=\sum_{k=0}^{\min\{y_t, y_{t-1}\}} p_b(k, y_{t-1}, \alpha_t) p_e(y_{t}-k,\xi),$$
where  $p_b(x, y_{t-1}, \alpha_t)$ for $x \in \{0,\dots, y_{t-1}\}$ is the pmf of a Binomial random variable with size $y_{t-1}$ and success probability $\alpha_t$. An analytical  expression of the score innovation $s_t$ can be found in  Appendix \ref{appendixA1}. The  logit link function in the SRE in  (\ref{m2}) is considered  to  ensure that the survival probability $\alpha_t$ is between zero and one.

The GAS-INAR model  in (\ref{m1}) and (\ref{m2}) retains the  interpretation of INAR models as  death-birth processes.
 In particular, the observed number of elements $y_t$ alive at time $t$ is given by the sum between the number of surviving elements from time $t-1$ and the new birth elements $\varepsilon_t$.  In our dynamic specification, each of the elements alive at time $t-1$ has a probability $\alpha_t$ of surviving at time $t$.  We also note that the proposed model  is observation-driven as the dynamic probability $\alpha_t$ is driven solely by past observations.  The score $s_t$ can be seen as the innovation of the dynamic system in  (\ref{m2}) as it provides the new information that becomes available   at time $t$ observing $y_t$. The interpretation of $s_t$ as an  innovation is further   justified by the fact that its  conditional expectation $E(s_t|y_{t-1},\alpha_t)$ is equal to zero.

\begin{figure}[h!]
\center
\includegraphics[scale=0.65]{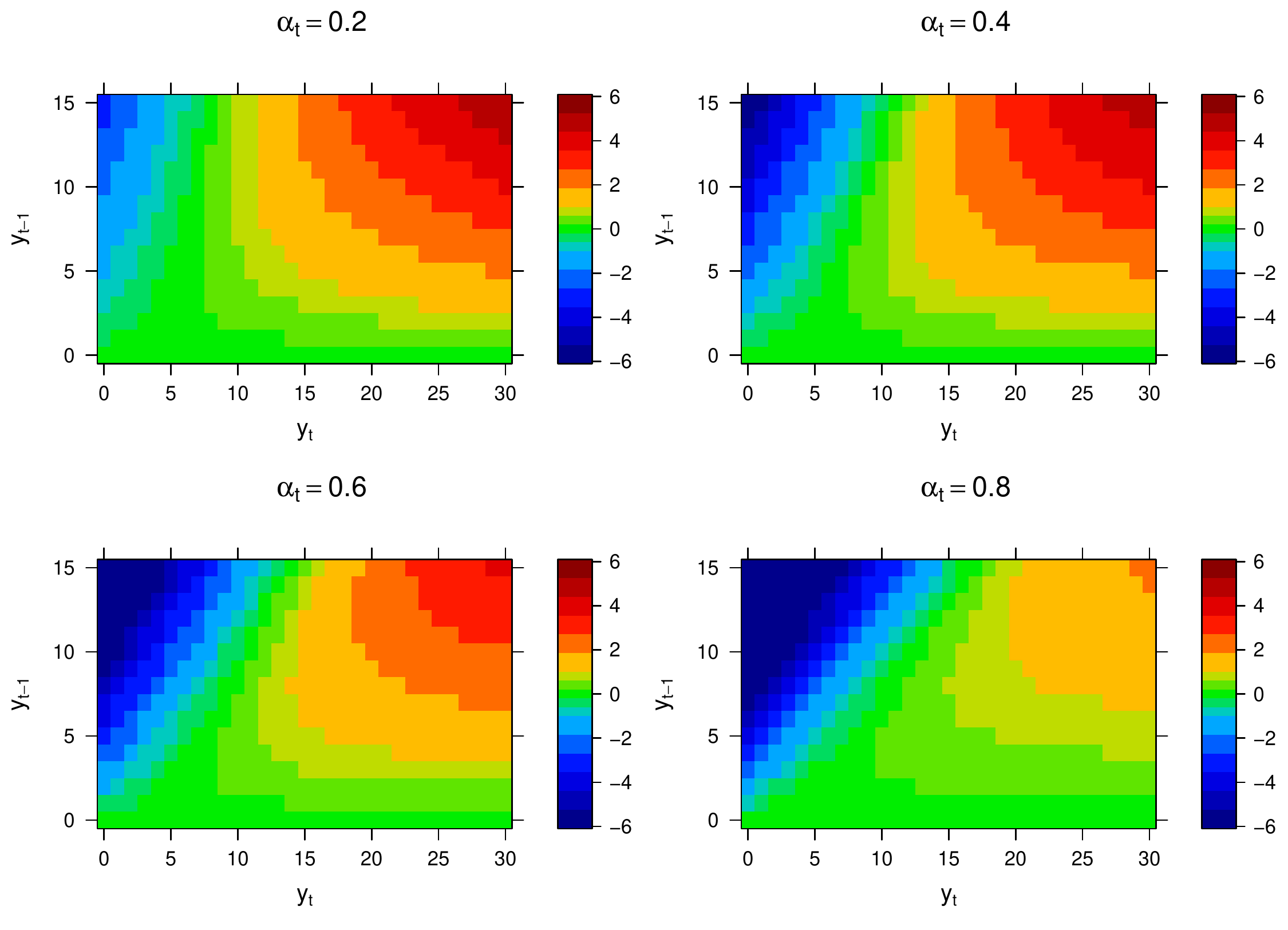}
\vspace{-0.25cm}
\caption{\textit{Impact of $y_t$ and $y_{t-1}$ on the score innovation $s_t$ for different values of the survival probability $\alpha_t$. A Poisson distribution with mean equal to five is considered as distribution of the error term $\varepsilon_t$. }}
\label{fig_imp}
\end{figure}

It is interesting to see how  the information obtained observing $y_t$ is processed through the score $s_t$ to update the survival probability from $\alpha_t$ to $\alpha_{t+1}$.  Figure \ref{fig_imp} describes the impact of $y_t$ on $s_t$ for different values of $y_{t-1}$ and $\alpha_t$. As we can see from the plots, the survival probability $\alpha_t$ gets a negative update when  $y_t$ is small and $y_{t-1}$ is large. This has an intuitive explanation: the information about $\alpha_t$ we get observing a small $y_t$ after a large $y_{t-1}$ is that the survival probability should be small as otherwise with a large $\alpha_t$ we wold expect many elements from time $t-1$ to survive and thus a large $y_t$ as well.  As a result,    $\alpha_t$  should get a negative update to discount this information.  Similarly, observing  a large $y_t$ following a large $y_{t-1}$ suggests an high survival probability. Thus, the probability $\alpha_t$ should be updated accordingly and get a positive innovation $s_t$. Finally, an innovation $s_t$ close to zero  may be an indication of  either  a lack of information or that the observed value of $y_t$ is compatible with the value $y_{t-1}$ and the current state of the  survival probability $\alpha_t$. The former case reflects  situations when $y_{t-1}$ is  zero (or close to zero). This because observing $y_t$ provides no information on the survival probability of the elements $y_{t-1}$ as there are no elements alive at $t-1$. On the other hand, the latter case of observing a value $y_t$ compatible with $y_{t-1}$ and $\alpha_t$ can be seen as the green area that separates the red an the blue areas in Figure \ref{fig_imp}. 

This line of reasoning concerning the direction of the update $s_t$ is subject to the current value of $\alpha_t$.  For instance, in a situation where $\alpha_t$ is  close to zero  perhaps observing a small $y_t$ after a large $y_{t-1}$ is exactly what we would expect. Thus the score update $s_t$ may be close to zero in this case.  This  dependence of the score update $s_t$  on the current  survival probability $\alpha_t$ can be noted across the different plots in Figure \ref{fig_imp}.

It is also worth mentioning that the functional form of the score innovation $s_t$  depends on the specification of the pmf of the error term $\varepsilon_t$ as the predictive likelihood depends on it. In practice, the  pmf $p_e(x,\xi)$ can be chosen in such a way  to take into account the main features observed in the data. For instance, as we will consider in the application in Section \ref{section4}, a Negative Binomial distribution may be considered instead of a Poisson when the data suggests overdispersion. Alternatively, a zero inflated Poisson or  Negative Binomial distributions may be employed when dealing with time series with a  large number of zeros.

\subsection{Parameter estimation}   

The static parameter vector $\theta$ of the  GAS-INAR model  can be estimated by ML. The log-likelihood function is available in closed form through a prediction error decomposition, namely
\begin{align*}
\hat L_T(\theta)=\frac{1}{T}\sum_{t=1}^T \log p(y_t|\hat \alpha_t(\theta), y_{t-1}, \xi).
\end{align*}
The filtered survival probability $\hat \alpha_t(\theta)$ is obtained recursively using the observed data $\{y_t\}_{t=0}^T$ as 
\begin{align}
  \logit\hat \alpha_{t+1}(\theta)=\omega+\beta \logit \hat \alpha_{t}(\theta) + \tau s_t( \hat \alpha_{t}(\theta), \xi),
  \label{filter}
 \end{align}
 where the recursion is initialized at a fixed point $\logit \hat \alpha_{0}(\theta)\in \mathbb{R}$. A reasonable choice for the initialization is $\logit \hat \alpha_{0}(\theta)=\omega/(1-\beta)$. That  is the unconditional mean  $E \text{logit} \alpha_{t}$ implied by the GAS-INAR model under the parametric assumption $\theta$. This follows immediately as the expected value of the score is equal to zero under standard regularity conditions. The ML estimator is finally given by
 \begin{align}
\hat \theta_T =\argsup_{\theta \in \Theta} \hat L_T(\theta),
\label{MLE}
\end{align}
where $\Theta$ is a compact parameter set contained in $\mathbb{R}\times (-1,1) \times \mathbb{R} \times \Xi$.
 
The asymptotic stability of the filtered parameter $\logit\hat\alpha_t(\theta)$ and the consistency of the ML estimator  as well as the predictive distribution are studied in Section \ref{section2}. Furthermore, in Section \ref{section3},   a simulation experiment is performed to study the finite sample behavior of the ML estimator and to further confirm its reliability.

\subsection{Forecasting} 

One of the  advantages of properly modeling count time series taking into account the discreteness of the data is that it is possible to obtain coherent forecasts of the entire pmf.  As shown in \cite{Freeland2004427}, forecasts $h$ steps ahead are typically available in closed form for INAR models as in  (\ref{inar}).  The  conditional pmf $h$ steps ahead  can be obtained by repeated applications of the convolution formula. Similarly, for point forecasts, a closed form expression is available as the conditional expectation $h$ steps ahead is $E(y_{T+h}|y_{T})= \alpha^h y_{T} + \mu$, with $\mu=E(\varepsilon_t)$. 

In the following, we illustrate a possible way to obtain $h$ steps ahead forecasts from the GAS-INAR model. A closed form expression for the conditional  pmf  $h$ steps ahead $p_{T+h}(x)$ is only available for  $h=1$. In particular, it is given by
$$p_{T+1}(x)=\sum_{k=0}^{\min\{x, y_T\}} p_{b}(k, y_T, \alpha_T)p_{e}(x-k).$$
Numerical methods are required to obtain $p_{T+h}(x)$ for $h\ge 2$.   A possibility is to approximate $p_{T+h}(x)$ considering the following simulation scheme.  First, simulate  $B$ realization for $y_{T+h}$,  $y_{T+h}^{(i)}$, $i=1,\dots, B$.  Then, obtain an approximation of $p_{T+h}(x)$ as $\hat p_{T+h}(x)=n_x^h/B$, where $n_x^h$ denotes the number  of draws $y_{T+h}^{(i)}$, $i=1,\dots, B$, equal to $x$.
The simulations of $y_{T+h}^{(i)}$, $i=1,\dots, B$,  can be  performed considering the following procedure.  For $k=1,\dots, h$
\begin{enumerate}
\item Simulate   $\varepsilon_k^{(i)}$ from the distribution   $p_e(x,\xi)$ and  $\alpha_{T+k}^{(i)}\circ y_{T+k-1}^{(i)}$ from a Binomial distribution with size $y_{T+k-1}^{(i)}$ and success probability $\alpha_{T+k}^{(i)}$.
\item Compute $y_{T+k}^{(i)}=\alpha_{T+k}^{(i)}\circ y_{T+k-1}^{(i)}+\varepsilon_k^{(i)}$ and  update $ \alpha_{T+k}^{(i)}$ to $ \alpha_{T+k+1}^{(i)}$ using the recursion  $\logit \alpha_{T+k+1}^{(i)}=\omega +\beta \logit \alpha_{T+k}^{(i)}+\tau s_{t+k}^{(i)}$.
\end{enumerate}

 Similarly, point forecasts  $h$ steps ahead  can be obtained  approximating the conditional expectation $E(y_{T+h}|y_{T},\alpha_t)$ with the sample average $B^{-1}\sum_{i=1}^By_{T+h}^{(i)}$. Alternatively, the sample  median of  $y_{T+h}^{(i)}$, $i=1,\dots, B$, can be considered to obtain integer forecasts that are coherent with the discreteness of the data, see \cite{Freeland2004427}.


\section{Some statistical properties}
\label{section2}

In this section, we discuss the reliability of  ML estimation. In particular, we show that the static parameter vector as well as the conditional pmf can be consistently estimated. We focus our asymptotic results on the case of model misspecification. As mentioned before, model misspecification is particularly relevant for models with score-driven parameters. This because  they should be interpreted as filters to approximate a more complex and unknown true DGP  \citep{blasq2015}. The consistency of the ML estimator is therefore obtained with respect to a pseudo-true parameter that has the interpretation of minimizing an average KL divergence between the GAS-INAR model  and an unknown true DGP. Consistency arguments with respect to pseudo-true parameters go back to \cite{white1982}.   In the following, we shall  only assume that the observed data are generated by a stationary and ergodic count process without  imposing a specific DGP.

\subsection{Stability of the filter}

A key ingredient to ensure the reliability of the ML estimator for observation-driven models is the stability of the filtered time-varying parameter. The stability of the filter is typically referred in the literature as the invertibility of the model, see \cite{SM2006} and \cite{Win2013}. As a first step, we derive conditions to ensure that the filtered parameter in (\ref{filter}) converges to a unique stationary  sequence irrespective of the initialization $\hat \alpha_0(\theta)$. This result is particularly important as it implies that  the initialization is irrelevant asymptotically and  provides the basis to ensure the consistency of the ML estimator.

First, we  impose some regularity conditions on the pmf  of the error term $p_e(x,\xi)$. 
\begin{assumption}
\label{assumption1}
The function $\xi \mapsto p_e(x,\xi)$ is continuous in $\Xi$ for any $x \in \mathbb{N}$ and $p_e(x,\xi)>0$ for any $(x,\xi) \in \mathbb{N}\times \Xi$.
\end{assumption}
 Assumption \ref{assumption1} requires the pmf  $p_e(x,\xi)$ to have full support in $\mathbb{N}$ and 
to be continuous  with respect to $\xi$. These conditions are satisfied for most parametric pmf such as the Poisson, the zero inflated Poisson and the Negative Binomial. However, it is worth mentioning that  distributions with limited support such as the Binomial are ruled out by this assumption. 

The next result ensures the stability of the filtered parameter $\{\hat \alpha_t(\theta) \}_{t\in \mathbb{N}}$ specified in (\ref{filter}). In particular, it shows the exponential almost sure  (e.a.s.)~uniform convergence of the functional sequence $\{\hat \alpha_t\}_{t \in \mathbb{N}}$ to a unique stationary and ergodic functional sequence $\{\tilde \alpha_t\}_{t \in \mathbb{Z}}$.  The convergence is  considered with respect to the uniform norm $\|\cdot\|_\Theta$,  where $\|f\|_\Theta=\sup_{\theta \in \Theta }|f(\theta)|$ for any function $f$ that maps from $\Theta$ into $\mathbb{R}$.  We recall that a sequence of non-negative random variables $\{w_t\}_{t\in\mathbb{N}}$ is said to converge e.a.s.~to zero if there exists a constant $\gamma>1$ such that $\gamma^t w_t \xrightarrow{a.s.}$ 0 as $t$ diverges.
\begin{proposition}
\label{proposition1}
Assume that  $\{y_t\}_{t \in \mathbb{Z}}$ is a stationary and ergodic sequence of count random variables such that $E y_t^2 < \infty$. Moreover, let Assumption \ref{assumption1}  be satisfied and let the following condition hold
\begin{eqnarray}
\label{contraction}
E\log\sup_{ \alpha \in (0,1) }|\beta+\tau \dot s_t( \alpha,\xi)|<0, \;\;\forall \; \theta \in \Theta,
\end{eqnarray}
where $\dot s_t(  \alpha,\xi)=\partial s_t ( \alpha, \xi)/\partial \logit\alpha$.
   Then, the filtered parameter $\{\hat \alpha_t(\theta)\}_{t \in \mathbb{N}}$ defined in (\ref{filter}) converges e.a.s.~and uniformly in $\Theta$ to a unique stationary and ergodic sequence $\{ \tilde \alpha_t (\theta)\}_{t \in \mathbb{Z}}$, i.e. 
$$\|\logit \hat \alpha_t  - \logit \tilde \alpha_t \|_\Theta \xrightarrow{\text{e.a.s.}} 0\;\; \text{as} \;\; t\rightarrow \infty,$$
for any initialization $\hat \alpha_0$ of the filter.
\end{proposition}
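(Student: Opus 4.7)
The plan is to recognize the filter recursion (\ref{filter}) as a stochastic recurrence equation (SRE) driven by a stationary ergodic input, and to invoke Bougerol's contraction theorem in the uniform form developed by \cite{SM2006}. Setting $x_t(\theta) = \logit\hat\alpha_t(\theta)$, the recursion reads $x_{t+1}(\theta) = \phi_t(x_t(\theta),\theta)$, where
\[
\phi_t(x,\theta) \;=\; \omega + \beta\, x + \tau\, s_t\!\bigl(\logit^{-1}(x),\xi\bigr).
\]
Since $s_t$ depends only on $(y_t,y_{t-1})$, the random maps $\{\phi_t(\cdot,\cdot)\}_{t\in\mathbb{Z}}$ form a stationary ergodic sequence in $C(\mathbb{R}\times\Theta)$.

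The argument proceeds in two ingredients. \emph{Contraction:} differentiating gives $\partial_x \phi_t(x,\theta) = \beta + \tau \dot s_t(\logit^{-1}(x),\xi)$, so the Lipschitz constant of $\phi_t(\cdot,\theta)$ is dominated by $\Lambda_t(\theta) := \sup_{\alpha\in(0,1)}|\beta+\tau\dot s_t(\alpha,\xi)|$, and condition (\ref{contraction}) supplies exactly the pointwise log-contraction $E\log\Lambda_t(\theta)<0$ for each $\theta\in\Theta$. Continuity of $\theta\mapsto\Lambda_t(\theta)$, compactness of $\Theta$, and an integrable envelope derived from $E y_t^2<\infty$ upgrade this to the uniform contraction needed. \emph{Log-moment at a point:} I would verify $E\log^+\|\phi_t(x_0,\cdot)\|_\Theta<\infty$ for some $x_0\in\mathbb{R}$. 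Exploiting the closed-form expression from Appendix \ref{appendixA1}, one can write $s_t(\alpha,\xi) = \sum_{k=0}^{\min(y_t,y_{t-1})}(k - y_{t-1}\alpha)\, w_k(\alpha,\xi)$ with nonnegative weights $w_k$ summing to one, so that $|s_t(\alpha,\xi)|$ is bounded uniformly in $(\alpha,\xi)\in(0,1)\times\Xi$ by an affine function of $y_{t-1}$; combined with $Ey_t<\infty$ this delivers the required log-moment bound.

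With both ingredients in hand, the uniform Bougerol theorem of \cite{SM2006} produces a unique stationary ergodic solution $\{\tilde x_t(\theta)\}_{t\in\mathbb{Z}}$ of the limiting SRE together with $\|x_t-\tilde x_t\|_\Theta\xrightarrow{\text{e.a.s.}}0$ regardless of the initialization $x_0$, which is precisely the statement of the proposition written in the $\logit\alpha$ coordinate.

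The main obstacle I expect is the uniform control of $\Lambda_t(\theta)$ and, more generally, of $\dot s_t(\alpha,\xi)$. The latter involves second-order terms of a convolution sum indexed by $k\in\{0,\dots,\min(y_t,y_{t-1})\}$ with binomial weights, and producing uniform polynomial envelopes in $(y_t,y_{t-1})$ that remain tractable as $\alpha$ approaches $0$ or $1$ requires careful manipulation; fortunately, the Jacobian factor $\alpha(1-\alpha)$ arising from the logit reparametrization typically absorbs the boundary singularities, so the bounds reduce to moments of $y_t$ of order at most two, covered by the hypothesis $E y_t^2<\infty$. Once these envelopes are in place, the remainder of the argument is a routine application of the uniform SRE machinery.
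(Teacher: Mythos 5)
Your overall strategy coincides with the paper's: cast the filter as a functional SRE $x_{t+1}(\theta)=\phi_t(x_t(\theta),\theta)$ with $\phi_t(x,\theta)=\omega+\beta x+\tau s_t(\logit^{-1}(x),\xi)$, bound $|s_t|$ and $\dot s_t$ uniformly in $(\alpha,\xi)$ by polynomials in $y_{t-1}$ and $m_t=\min(y_t,y_{t-1})$ (the paper's Lemma A.1 gives $|s_t(\alpha,\xi)|\le 2y_{t-1}$ and $-y_{t-1}/4\le\dot s_t(\alpha,\xi)\le m_t^2$, so $Ey_t^2<\infty$ suffices for all moment requirements), and then apply a Bougerol-type perturbation theorem in $\mathbb{C}(\Theta,\mathbb{R})$.

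The one step that does not go through as written is the claimed ``upgrade'' of the pointwise contraction (\ref{contraction}) to the uniform contraction $E\log\sup_{\theta\in\Theta}\Lambda_0(\theta)<0$ that the Straumann--Mikosch version of the theorem demands. Continuity of $\theta\mapsto\Lambda_0(\theta)$, compactness of $\Theta$ and an integrable envelope do not deliver this: $E\log\sup_{\theta}\Lambda_0(\theta)$ can be strictly positive even when $E\log\Lambda_0(\theta)<0$ for every fixed $\theta$, because the supremum may be attained at different $\theta$ on different realizations. What these ingredients do give is a local statement --- for each $\theta$ a neighborhood $B(\theta)$ with $E\log\|\Lambda_0\|_{B(\theta)}<0$, by continuity and dominated convergence against the envelope $|\beta|+|\tau|\max(y_{t-1}/4,\,m_t^2)$ --- after which a finite subcover yields e.a.s.~uniform convergence on each piece and hence on $\Theta$. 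That local-to-global argument is precisely Theorem 2 of Wintenberger (2013), which is what the paper invokes in place of the global uniform condition: it requires only the pointwise contraction together with the $\log^+$ moment of the uniform Lipschitz coefficient and of $\sup_\theta|\phi_0(x,\theta)|$ at one point $x$. Your proof is therefore repairable by switching to (or re-deriving) Wintenberger's result, but as stated the contraction step contains a genuine gap; the remaining ingredients (the affine-in-$y_{t-1}$ bound on $|s_t|$ via the weighted-average representation, and the boundedness of $\dot s_t$ after the logit reparametrization) are correct and match the paper's technical lemmas.
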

  The proof is given in the appendix. Proposition \ref{proposition1} does not require correct specification of the model. The observed data  can be  generated by any stationary and ergodic count process.

The contraction condition in (\ref{contraction}) can be checked empirically using the observed data. It  is not possible to obtain a closed form expression for (\ref{contraction}) as it  depends on the DGP and on the specification of $p_e(x,\xi)$. However, with the next proposition, we  show that the parameter region $\Theta$ that satisfies (\ref{contraction})  is not degenerate.

\begin{proposition}
\label{proposition2}
The contraction condition  $(\ref{contraction})$  of Proposition \ref{proposition1} is implied by the following sufficient condition
$$E\log \max(|\beta-\tau y_{t-1}/4|,|\beta+\tau m_t^2|)<0, \;\;\forall \; \theta \in \Theta,$$
where $m_t=\min\{y_{t-1}, y_t\}$.
\end{proposition}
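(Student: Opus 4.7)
The plan is to derive an explicit representation of $\dot s_t(\alpha,\xi)$ as a difference of two nonnegative quantities, and then bound each of them uniformly in $\alpha \in (0,1)$. First, setting $\eta = \logit\alpha$ so that $d\alpha/d\eta = \alpha(1-\alpha)$, the standard binomial score identity gives
\begin{equation*}
\partial_\eta p_b(k,y_{t-1},\alpha) = (k - y_{t-1}\alpha)\, p_b(k,y_{t-1},\alpha).
\end{equation*}
Dividing the differentiated convolution $p(y_t|\alpha,y_{t-1},\xi) = \sum_{k=0}^{m_t} p_b(k,y_{t-1},\alpha) p_e(y_t-k,\xi)$ by itself, the score can be represented as a posterior expectation
\begin{equation*}
s_t(\alpha,\xi) = E_K[K] - y_{t-1}\alpha,
\end{equation*}
where $K$ is the auxiliary random variable on $\{0,1,\ldots,m_t\}$ with probability mass proportional to $p_b(k,y_{t-1},\alpha)\,p_e(y_t-k,\xi)$.

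Next I would differentiate $s_t$ once more in $\eta$, using the same identity both for the numerator sum and for the normalising constant. The cross-terms cancel and produce the key identity
\begin{equation*}
\dot s_t(\alpha,\xi) = \mathrm{Var}_K(K) - y_{t-1}\alpha(1-\alpha).
\end{equation*}
Since $K$ takes values in $\{0,\ldots,m_t\}$ we have $0 \le \mathrm{Var}_K(K) \le m_t^2$, while $0 \le \alpha(1-\alpha) \le 1/4$ for every $\alpha \in (0,1)$, so both pieces are uniformly controlled and
\begin{equation*}
-\tfrac{y_{t-1}}{4} \le \dot s_t(\alpha,\xi) \le m_t^2, \qquad \alpha \in (0,1),\ \xi \in \Xi.
\end{equation*}

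Finally, because $x \mapsto |\beta + \tau x|$ is convex, its supremum over a closed bounded interval is attained at an endpoint. Applying this on $[-y_{t-1}/4,\, m_t^2]$ gives
\begin{equation*}
\sup_{\alpha \in (0,1)} |\beta + \tau \dot s_t(\alpha,\xi)| \le \max\bigl(|\beta - \tau y_{t-1}/4|,\, |\beta + \tau m_t^2|\bigr),
\end{equation*}
and monotonicity of the logarithm together with the expectation operator delivers condition (\ref{contraction}) under the proposed sufficient assumption. The only genuinely delicate step is the clean cancellation leading to the identity $\dot s_t = \mathrm{Var}_K(K) - y_{t-1}\alpha(1-\alpha)$; once this representation is in place, the uniform enclosure in $\alpha$ and the convexity argument are routine real-line estimates.
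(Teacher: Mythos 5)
Your proof is correct and follows essentially the same route as the paper: the identity $\dot s_t(\alpha,\xi)=\mathrm{Var}_K(K)-y_{t-1}\alpha(1-\alpha)$ is just the paper's explicit double-sum expression for $\dot s_t$ in (\ref{second_der}) rewritten as a posterior variance, and the resulting enclosure $-y_{t-1}/4\le \dot s_t(\alpha,\xi)\le m_t^2$ is exactly part (ii) of the paper's Lemma \ref{lemma1}, from which the conclusion follows by the same endpoint/convexity argument. No gaps.
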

 Proposition \ref{proposition2}  guarantees that the parameter region $\Theta$ is not degenerate as for small enough $|\beta|$ and $|\tau|$ the inequality is always satisfied.


\subsection{Consistency of  ML estimation}

We assume the observed data to be a realized path from an unknown  DGP $\{y_t\}_{t \in \mathbb{Z}}$. Furthermore, we denote with     $p^o(x|y^{t-1})$, $x\in \mathbb{N}$, the true pmf of  $y_t$ conditionally on the past observations $y^{t-1}=\{y_{t-1}, y_{t-2},\dots\}$. The  KL divergence between the true conditional pmf $p^o(x|y^{t-1})$ and the postulated  pmf $p(x|\tilde \alpha_t(\theta),y_{t-1},\xi)$ is given by
$$KL_t(\theta)=\sum_{x=0}^\infty \log \left( \frac{p^o(x|y^{t-1})}{p(x|\tilde \alpha_t(\theta),y_{t-1},\xi)} \right) p^o(x|y^{t-1}).$$
Note that conditional  KL divergence $KL_t(\theta)$ depends on $t$ as it is a function of the past observations $y^{t-1}$.
We are now ready to  formally define  the pseudo-true parameter $\theta^*$.
\begin{definition}
\label{definition1}
The pseudo-true parameter $\theta^*$ is the  minimizer of the average  KL divergence $KL(\theta)= {E}KL_t(\theta)$  in the parameter set $\Theta$.
\end{definition}
 We also denote with $\alpha_t^*=\tilde \alpha_t(\theta^*)$ the pseudo-true dynamic survival probability and with $p_t^*(x)=p(x| \alpha_t^*,y_{t-1},\xi^*)$, $x\in \mathbb{N}$, the pseudo-true conditional pmf.  In the following, we also prove the consistency of the  plug-in estimators $\hat \alpha_t(\hat \theta_T)$  and $\hat p_t(x,\hat \theta_T)=p(x|y_{t-1},\hat \alpha_t(\hat \theta_T),\hat \xi_T)$ for the time-varying survival probability  and conditional pmf  respectively. This is of practical interest  as typically the main objective of INAR models is not the interpretation of the static parameter estimates but approximating the true pmf for  forecasting purposes. 

We start considering the following assumption, which imposes some moment conditions and the contraction condition  of Proposition \ref{proposition1}.
\begin{assumption}
\label{assumption1.5}
The following moment conditions hold true $Ey^2_t<\infty$, $E|\log p^o(y_t|y^{t-1})|<\infty$ and  $E\sup_{\theta\in \Theta}|\log p_e(y_t,\xi)|<\infty$. Furthermore, the  contraction condition in (\ref{contraction}) is satisfied.
\end{assumption}
 Assumption \ref{assumption1.5} is needed to ensure the uniform a.s.~convergence of the  likelihood function $\hat L_T(\theta)$ to a well defined deterministic function $L(\theta)=E l_0(\theta)$, where $l_t(\theta)=\log p(y_t|\tilde \alpha_t(\theta),y_{t-1},\xi)$ denotes the $t$-th contribution to the likelihood function when the limit filter $\tilde \alpha_t(\theta)$ is considered. Furthermore, the integrability condition on the  unknown true pmf $E|\log p^o(y_t|y^{t-1})|<\infty$ is required to ensure that the average KL divergence exists and thus the maximizer of  $L(\theta)$ corresponds to the pseudo-true parameter $\theta^*$.

Note also  that the uniform moment condition $E\sup_{\theta\in \Theta}|\log p_e(y_t,\xi)|<\infty$ is needed  only because we are  considering a general class of pmf for the error term. For most  pmf, this condition is always satisfied. For instance, it holds true   immediately   as long as $Ey_t^2<\infty$ if  $ p_e(x,\xi)$ is a Poisson or a  Negative Binomial pmf. 

Finally, we impose the following identifiability condition.
\begin{assumption}
\label{assumption2}
The function $L(\theta)=E l_0(\theta)$ has a unique maximizer in the set $\Theta$.
\end{assumption}
Assumption \ref{assumption2}  ensures the uniqueness of the pseudo-true parameter $\theta^*$. In general, if this assumption is not satisfied, we   obtain that the limit points of the ML estimator belong to the set of points that minimize the average KL divergence  $KL(\theta)$. 

We are now ready to deliver the strong consistency of the ML estimator  with respect to the pseudo-true parameter $\theta^*$.
\begin{theorem}
\label{theorem1}
Let the observed data $\{y_t\}_{t=1}^T$  be generated by a stationary and ergodic count process $\{y_t\}_{t\in \mathbb{Z}}$ and let the assumptions \ref{assumption1}-\ref{assumption2}  be  satisfied. Then the ML estimator defined in (\ref{MLE}) is strongly consistent with respect to the pseudo-true parameter $\theta^*$, i.e.
$$\hat{\theta}_T \xrightarrow{\text{a.s.}}\theta^*, \;\;\;\; T\xrightarrow{}\infty.$$
\end{theorem}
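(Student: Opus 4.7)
The plan is to follow the classical Wald--White argmax argument adapted to observation-driven models: establish uniform almost sure convergence of the sample log-likelihood $\hat L_T(\theta)$ to a deterministic limit $L(\theta)$ on the compact set $\Theta$, identify $\theta^*$ as the unique maximizer of $L$, and then invoke a standard argmax continuous mapping result. The scaffolding is to introduce the ``stationary'' log-likelihood $\tilde L_T(\theta) = T^{-1}\sum_{t=1}^T l_t(\theta)$ with $l_t(\theta)=\log p(y_t|\tilde\alpha_t(\theta),y_{t-1},\xi)$ built on the limit filter $\tilde\alpha_t$ supplied by Proposition \ref{proposition1}, and decompose $\hat L_T - L = (\hat L_T-\tilde L_T)+(\tilde L_T-L)$.

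For the second piece, the process $\{(y_t,y_{t-1},\tilde\alpha_t(\cdot))\}_{t\in\mathbb{Z}}$ is stationary and ergodic since $\tilde\alpha_t$ is a measurable functional of $y^{t-1}$, so pointwise the ergodic theorem gives $\tilde L_T(\theta)\to L(\theta)=E l_0(\theta)$ a.s. To upgrade to uniform convergence I would apply a standard ergodic uniform law (e.g.\ Rao 1962 or Straumann--Mikosch): continuity of $\theta\mapsto l_0(\theta)$ on the compact set $\Theta$ follows from Assumption \ref{assumption1} and the continuity of $\theta \mapsto \tilde\alpha_t(\theta)$ inherited from the SRE, while the domination $E\sup_{\theta\in\Theta}|l_0(\theta)|<\infty$ follows from the convolution formula by bounding $\log p(y_t|\alpha,y_{t-1},\xi)$ above by $0$ and below by $\log p_e(y_t,\xi) + \log p_b(0,y_{t-1},\alpha)$, and then using $Ey_t^2<\infty$ together with $E\sup_{\theta\in\Theta}|\log p_e(y_t,\xi)|<\infty$ from Assumption \ref{assumption1.5}.

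The main obstacle is the transfer of the e.a.s.\ uniform convergence of the filter to the log-likelihood, i.e.\ $\|\hat L_T-\tilde L_T\|_\Theta \to 0$ a.s. A mean value expansion in $\text{logit}\,\alpha$ yields
\[
\bigl|\log p(y_t|\hat\alpha_t,y_{t-1},\xi)-\log p(y_t|\tilde\alpha_t,y_{t-1},\xi)\bigr|\le \sup_{\alpha\in(0,1)}|s_t(\alpha,\xi)|\cdot|\text{logit}\,\hat\alpha_t-\text{logit}\,\tilde\alpha_t|.
\]
The explicit expression for $s_t$ in Appendix \ref{appendixA1} shows that $\sup_{(\alpha,\xi)\in(0,1)\times\Xi}|s_t(\alpha,\xi)|$ is dominated by a polynomial in $(y_t,y_{t-1})$, which under $Ey_t^2<\infty$ is integrable. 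Since Proposition \ref{proposition1} supplies $\gamma>1$ with $\gamma^t\|\text{logit}\,\hat\alpha_t-\text{logit}\,\tilde\alpha_t\|_\Theta\to 0$ a.s., a Borel--Cantelli argument applied to $P(\gamma^{-t/2}\cdot\mathrm{poly}(y_t,y_{t-1}) > \varepsilon)$ shows that the $t$-th summand is e.a.s.\ bounded, and a Cesàro averaging gives $\|\hat L_T-\tilde L_T\|_\Theta\xrightarrow{\text{a.s.}}0$. This bookkeeping --- clean uniform-in-$(\alpha,\xi)$ bounds on $|s_t|$ together with the careful Borel--Cantelli step --- is where I expect the technical work to concentrate.

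Once $\|\hat L_T - L\|_\Theta\to 0$ a.s.\ is established, the conclusion is automatic. Writing
\[
L(\theta)=E\log p^o(y_t|y^{t-1}) - E\,KL_t(\theta),
\]
where the first term is finite by the assumption $E|\log p^o(y_t|y^{t-1})|<\infty$ and independent of $\theta$, maximizing $L$ on $\Theta$ is equivalent to minimizing the average Kullback--Leibler divergence $KL(\theta)$, so the unique maximizer granted by Assumption \ref{assumption2} coincides with the pseudo-true parameter $\theta^*$ of Definition \ref{definition1}. Compactness of $\Theta$, continuity of $L$, uniqueness of the maximizer, and uniform a.s.\ convergence then yield $\hat\theta_T\xrightarrow{\text{a.s.}}\theta^*$ by the standard argmax theorem, completing the proof.
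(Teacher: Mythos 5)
Your proposal follows essentially the same route as the paper: the same decomposition $\hat L_T - L = (\hat L_T - \tilde L_T) + (\tilde L_T - L)$ with the stationary likelihood built on the limit filter, the same mean-value bound $|\hat l_t(\theta)-l_t(\theta)|\le \sup_{\alpha}|s_t(\alpha,\xi)|\,|\logit\hat\alpha_t(\theta)-\logit\tilde\alpha_t(\theta)|$ combined with the e.a.s.\ convergence from Proposition \ref{proposition1} (the paper invokes Lemma 2.1 of Straumann and Mikosch where you do the Borel--Cantelli step by hand; these are the same argument), the same uniform ergodic theorem of Rao (1962) for the stationary part, and the same identification $L(\theta)=E\log p^o(y_t|y^{t-1})-KL(\theta)$ so that the unique maximizer granted by Assumption \ref{assumption2} is $\theta^*$.

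The one genuine gap is in the domination step $E\sup_{\theta\in\Theta}|l_0(\theta)|<\infty$. Your lower bound is $\log p(y_t|\alpha,y_{t-1},\xi)\ge \log p_e(y_t,\xi)+y_{t-1}\log(1-\alpha)$, which must be evaluated at $\alpha=\tilde\alpha_t(\theta)$. The term $-y_{t-1}\log(1-\tilde\alpha_t(\theta))$ is \emph{not} controlled by $Ey_t^2<\infty$ and $E\sup_{\theta\in\Theta}|\log p_e(y_t,\xi)|<\infty$ alone: $-\log(1-\tilde\alpha_t(\theta))$ is unbounded as the limit filter approaches one, so its supremum over $\Theta$ need not even be finite, let alone integrable against $y_{t-1}$. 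The paper devotes a separate lemma (Lemma \ref{lemma2}) to proving $E\|\logit\tilde\alpha_t\|_\Theta^2<\infty$, by dominating $\|\logit\hat\alpha_t\|_\Theta$ with a stationary linear recursion $\hat v_{t+1}=\omega_u+\beta_u\hat v_t+2\tau_u y_t$ (using the score bound $|s_t(\alpha,\xi)|\le 2y_{t-1}$ of Lemma \ref{lemma1}) and passing to the stationary limit. Combined with $-\log(1-\alpha)\le 1+|\logit\alpha|$ and Cauchy--Schwarz against $y_{t-1}$, this is what actually closes the integrability argument. Without some such uniform-in-$\theta$ moment control on $\logit\tilde\alpha_t(\theta)$, your claimed domination does not follow from the stated assumptions; the rest of your argument is sound.
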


As special case  of Theorem \ref{theorem1}, we could  also obtain the strong consistency of the ML estimator when the model is correctly specified.
\begin{remark}
\label{corollary1}
If we assume that  the observed data  $\{y_t\}_{t=1}^T$  are generated by a stationary and ergodic  process $\{y_t\}_{t\in \mathbb{Z}}$ that satisfies the model's equations (\ref{m1}) and (\ref{m2}) for $\theta=\theta_0$, $\theta_0\in \Theta$. It can be easily shown that under Assumptions \ref{assumption1}-\ref{assumption2} the ML estimator is strongly consistent.
\end{remark}
In the next section, the finite sample properties of the ML estimator under correct specification are investigated through a simulation study.

 We now turn our attention to the study of the consistency of  the plug-in estimators $\hat \alpha_t(\hat \theta_T)$  and $\hat p_t(x,\hat \theta_T)$.  Note that the consistency of these estimators do not follow trivially from  the consistency of $\hat\theta_T$. This because  these plug-in estimators are random functions of $\hat\theta_T$  that change at different time $t$ without converging. Therefore, it is not possible to trivially  apply a continuous mapping theorem and immediately obtain the desired consistency. The results we obtain require that both $t$ and the sample size $T$ go to infinity. This because $T\rightarrow\infty$ is needed for the consistency of the ML estimator and $t\rightarrow\infty$ is needed to make the effect of the initialization of the filter to vanish.

 The next result shows that the plug-in estimator $\hat \alpha_t(\hat \theta_T)$ is  strongly consistent with respect to the pseudo-true survival probability $\alpha_t^*$.
\begin{lemma}
\label{lemma1.0}
Let the conditions of Theorem \ref{theorem1} hold.  Then, the  plug-in estimator $\logit \hat \alpha_t(\hat \theta_T)$ is strongly consistent, i.e.
$$\left|\logit\hat \alpha_t(\hat \theta_T)-\logit\alpha_t^*\right| \xrightarrow{\text{a.s.}}0,\;\; \;\;\;\; t\xrightarrow{}\infty,\; T\xrightarrow{}\infty.$$
\end{lemma}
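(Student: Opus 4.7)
The plan is to use the triangle inequality
\begin{equation*}
|\logit\hat\alpha_t(\hat\theta_T) - \logit\alpha_t^*| \leq \underbrace{|\logit\hat\alpha_t(\hat\theta_T) - \logit\tilde\alpha_t(\hat\theta_T)|}_{(\mathrm{A})} + \underbrace{|\logit\tilde\alpha_t(\hat\theta_T) - \logit\tilde\alpha_t(\theta^*)|}_{(\mathrm{B})},
\end{equation*}
inserting the stationary limit filter evaluated at the estimated parameter as the intermediate point, and to show that each of the two terms converges to zero almost surely.

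Term $(\mathrm{A})$ is handled directly by Proposition \ref{proposition1}. For any value of $\hat\theta_T\in\Theta$ one has $(\mathrm{A})\leq \|\logit\hat\alpha_t-\logit\tilde\alpha_t\|_\Theta$, and since Assumption \ref{assumption1.5} supplies exactly the hypotheses of Proposition \ref{proposition1}, this uniform norm converges exponentially almost surely to zero as $t\to\infty$. The bound is uniform over $\Theta$, so it remains valid at the random point $\hat\theta_T$, and therefore $(\mathrm{A})\to 0$ a.s.~as $t\to\infty$ irrespective of $T$.

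Term $(\mathrm{B})$ combines the consistency $\hat\theta_T\xrightarrow{\mathrm{a.s.}}\theta^*$ from Theorem \ref{theorem1} with continuity of $\theta\mapsto\logit\tilde\alpha_t(\theta)$. The initialized filter $\theta\mapsto\logit\hat\alpha_t(\theta)$ is continuous on the compact set $\Theta$ by construction, as it is obtained by iterating the recursion (\ref{filter}) from a fixed starting value and the score $s_t(\alpha,\xi)$ is jointly continuous under Assumption \ref{assumption1}. Since $\logit\tilde\alpha_t$ is the uniform limit on the compact set $\Theta$ of this sequence of continuous functions, it is itself continuous for almost every sample path. Together with $\hat\theta_T\to\theta^*$ a.s., continuity yields $\logit\tilde\alpha_t(\hat\theta_T)\to\logit\tilde\alpha_t(\theta^*)$ a.s.~for each fixed $t$ as $T\to\infty$.

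The delicate step, and the one I expect to be the main obstacle, is upgrading this fixed-$t$ statement to convergence as $(t,T)\to(\infty,\infty)$ jointly. What is required is that the family $\{\logit\tilde\alpha_t(\cdot)\}_{t\in\mathbb{Z}}$ enjoys a modulus of continuity at $\theta^*$ that does not deteriorate with $t$. The natural route is to exploit the contraction condition (\ref{contraction}) a second time: differentiating the limit recursion formally with respect to $\theta$ produces a linear stochastic recurrence equation whose multiplier is $\beta+\tau\dot s_t(\tilde\alpha_t,\xi)$, precisely the object whose log is negative in expectation under (\ref{contraction}). By a Bougerol-type argument this linear SRE admits a unique stationary and ergodic solution, providing a stationary-ergodic Lipschitz-type bound $|\logit\tilde\alpha_t(\theta)-\logit\tilde\alpha_t(\theta^*)|\leq L_t\|\theta-\theta^*\|$ in a neighborhood of $\theta^*$. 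Coupled with $\|\hat\theta_T-\theta^*\|\to 0$ a.s., this drives $(\mathrm{B})$ to zero along the relevant joint regime of $t$ and $T$, completing the proof.
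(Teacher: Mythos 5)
Your decomposition is exactly the one the paper uses: insert $\logit\tilde\alpha_t(\hat\theta_T)$ as the intermediate point, kill term $(\mathrm{A})$ with the uniform e.a.s.~convergence of Proposition \ref{proposition1}, and reduce everything to term $(\mathrm{B})$. You have also correctly identified where the real difficulty lies, namely that pointwise continuity of $\theta\mapsto\logit\tilde\alpha_t(\theta)$ at fixed $t$ is not enough and one needs a modulus of continuity at $\theta^*$ that is stationary in $t$. Where you diverge is in how you propose to obtain that modulus, and this is where there are two genuine problems.

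First, your mechanism --- differentiating the limit recursion with respect to $\theta$ to build a derivative SRE --- requires the score $s_t(\alpha,\xi)$, and hence $\xi\mapsto p_e(x,\xi)$, to be differentiable in $\xi$. That is Assumption \ref{assumption2.5}, which is \emph{not} among the hypotheses of Theorem \ref{theorem1} and therefore not available for this lemma (the paper deliberately reserves it for the pmf plug-in result, Theorem \ref{theorem2}). The paper's argument avoids this: following Theorem 3 of Wintenberger (2013), it applies the mean value theorem only in the state variable $x=\logit\alpha$ (where smoothness comes for free from the Binomial part under Assumption \ref{assumption1}), obtaining on a compact neighborhood $B(\theta^*)$ where the contraction holds uniformly the telescoped bound $\|\logit\tilde\alpha_t(\cdot)-\logit\tilde\alpha_t^*\|_{B(\theta^*)}\le\sum_{k\ge1}\prod_{i=1}^{k}\|\Lambda_{t-i}\|_{B(\theta^*)}\,\|\phi_{t-k}(\logit\tilde\alpha_{t-k}^*,\cdot)-\logit\tilde\alpha_{t-k+1}^*\|_{B(\theta^*)}$, whose summability follows from Lemma 2.1 of Straumann--Mikosch together with the moment bound of Lemma \ref{lemma2}. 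Only continuity in $\theta$ is needed to make the increments small as $B(\theta^*)$ shrinks.

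Second, even granting a local Lipschitz bound $|\logit\tilde\alpha_t(\theta)-\logit\tilde\alpha_t(\theta^*)|\le L_t\|\theta-\theta^*\|$ with $\{L_t\}$ stationary and ergodic, your last sentence asserts rather than proves the conclusion: a stationary $L_t$ is a.s.~finite but not bounded in $t$, so $L_t\|\hat\theta_T-\theta^*\|$ need not vanish along an arbitrary joint regime $(t,T)\to(\infty,\infty)$ without either a rate for $\|\hat\theta_T-\theta^*\|$ or a coupling of $t$ to $T$. This is precisely the step that the continuous-invertibility machinery of Wintenberger's Theorem 3 is designed to handle (via the uniform-over-$B(\theta^*)$ bound above, which can be made small by shrinking the neighborhood, combined with $\hat\theta_T\in B(\theta^*)$ eventually a.s.), and it is the step you would need to spell out to make your alternative route rigorous.
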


In order to obtain the consistency of the   plug-in estimator $\hat p_t(x,\hat \theta_T)$,  we need the following additional regularity condition on the pmf of the error term.
\begin{assumption}
\label{assumption2.5}
The function $\xi \mapsto p_e(x, \xi)$  is continuously differentiable in $\Xi$ for any $x\in \mathbb{N}$.
\end{assumption}
Assumption \ref{assumption2.5} is a standard regularity condition that  is satisfied for most popular pmf such as the Poisson and the Negative Binomial. The next result delivers the consistency of the conditional pmf estimator. In this case, we are only able to ensure consistency and not strong consistency.
\begin{theorem}
\label{theorem2}
Let the observed data $\{y_t\}_{t=1}^T$  be generated by a stationary and ergodic count process $\{y_t\}_{t\in \mathbb{Z}}$ and let the assumptions \ref{assumption1}-\ref{assumption2.5}  be  satisfied. Then the conditional pmf plug-in estimator $\hat p_t(x,\hat \theta_T)$ is consistent, i.e.
$$|\hat p_t(x,\hat \theta_T)- p^*_t(x)| \xrightarrow{\text{pr.}}0,\;\; \;\;\;t\xrightarrow{}\infty,\; T\xrightarrow{}\infty,$$
for any $x \in \mathbb{N}$.
\end{theorem}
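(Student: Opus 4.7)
The plan is to use the convolution decomposition
\[p(x\,|\,y_{t-1},\alpha,\xi)=\sum_{k=0}^{\min\{x,y_{t-1}\}}p_b(k,y_{t-1},\alpha)\,p_e(x-k,\xi)\]
and a triangle-inequality split to separately control the $\xi$-perturbation and the $\alpha$-perturbation. Write $\hat p_t(x,\hat\theta_T)-p^*_t(x)=A_{t,T}+B_{t,T}$, where $A_{t,T}$ is obtained by moving $\alpha$ from $\hat\alpha_t(\hat\theta_T)$ to $\alpha^*_t$ with $\xi=\xi^*$ fixed, and $B_{t,T}$ is obtained by moving $\xi$ from $\hat\xi_T$ to $\xi^*$ with $\alpha=\hat\alpha_t(\hat\theta_T)$ fixed.

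The term $B_{t,T}$ is straightforward: the convolution has at most $x+1$ non-zero summands and $\sum_{k}p_b(k,\cdot,\cdot)\le 1$, so
\[|B_{t,T}|\le \max_{0\le j\le x}|p_e(j,\hat\xi_T)-p_e(j,\xi^*)|,\]
which converges to $0$ almost surely as $T\to\infty$ by Theorem \ref{theorem1} (giving $\hat\xi_T\to\xi^*$ a.s.) together with the continuity of $\xi\mapsto p_e(j,\xi)$ from Assumption \ref{assumption1}. The term $A_{t,T}$ is more delicate because the number of Binomial summands is governed by the random, unbounded $y_{t-1}$. The key device is the standard Bernoulli coupling, which yields $d_{TV}(\mathrm{Bin}(n,\alpha),\mathrm{Bin}(n,\alpha'))\le n|\alpha-\alpha'|$ for all $n$, so
\[|A_{t,T}|\le\sum_{k=0}^{y_{t-1}}|p_b(k,y_{t-1},\hat\alpha_t)-p_b(k,y_{t-1},\alpha^*_t)|\le 2\,y_{t-1}\,|\hat\alpha_t(\hat\theta_T)-\alpha^*_t|,\]
where $p_e\le 1$ was used in the first inequality.

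To convert this into convergence in probability, note that Lemma \ref{lemma1.0} together with Lipschitz continuity of the inverse-logit (constant $1/4$) gives $|\hat\alpha_t(\hat\theta_T)-\alpha^*_t|\to 0$ almost surely as $t,T\to\infty$. Given $\eta,\delta>0$, use stationarity, the moment bound $Ey_t^2<\infty$, and Markov's inequality to pick $M$ with $\sup_t P(y_{t-1}>M)<\delta/2$; then
\[P(|A_{t,T}|>\eta)\le P(y_{t-1}>M)+P\bigl(|\hat\alpha_t(\hat\theta_T)-\alpha^*_t|>\eta/(2M)\bigr),\]
and the second term tends to zero. Combining with the bound on $|B_{t,T}|$ yields the stated convergence in probability for every fixed $x\in\mathbb{N}$. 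The main obstacle is controlling the $\alpha$-perturbation against the unbounded $y_{t-1}$: the total-variation coupling bound supplies the correct linear-in-$n$ modulus of continuity for the Binomial family, while the probabilistic truncation via $Ey_t^2<\infty$ is also the reason why only convergence in probability, rather than almost sure convergence, is obtained here.
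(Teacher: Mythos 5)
Your proof is correct, and it follows the same overall skeleton as the paper's: split the difference into an $\alpha$-perturbation and a $\xi$-perturbation, obtain a Lipschitz-type bound in $\alpha$ whose random slope is linear in $y_{t-1}$, and then combine the consistency of $\hat\theta_T$ and of the filtered parameter with the tightness of the stationary $y_{t-1}$ to conclude convergence in probability. The technical devices differ, though, in ways worth noting. The paper applies the mean value theorem in $(\logit\alpha,\xi)$ and invokes its Lemma \ref{lemma4}, which bounds $|\partial p(x|y_{t-1},\alpha,\xi)/\partial\logit\alpha|$ by $\eta_t=2(1+y_{t-1})y_{t-1}$ and $\|\partial p(x|y_{t-1},\alpha,\xi)/\partial\xi\|_1$ by a constant $C_x$; the latter is where Assumption \ref{assumption2.5} enters. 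You instead control the $\alpha$-part by the Bernoulli-coupling total-variation inequality $\sum_k|p_b(k,n,\alpha)-p_b(k,n,\alpha')|\le 2n|\alpha-\alpha'|$ (composed with the Lipschitz inverse logit so that Lemma \ref{lemma1.0}, stated for $\logit\hat\alpha_t$, applies), and the $\xi$-part by plain continuity of $\xi\mapsto p_e(j,\xi)$ over the finitely many relevant values $j\le x$. This buys you two things: Assumption \ref{assumption2.5} becomes superfluous (continuity from Assumption \ref{assumption1} suffices), and your slope $2y_{t-1}$ is slightly sharper than the paper's quadratic $\eta_t$, though under $Ey_t^2<\infty$ this makes no difference. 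You are also more explicit than the paper about the final step --- the truncation of $y_{t-1}$ at level $M$ via Markov's inequality to turn the product of a tight random slope and an a.s.\ vanishing factor into an $o_p(1)$ term --- which the paper compresses into the assertion that the bound ``follows immediately.'' No gaps.
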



\section{Monte Carlo experiment}
\label{section3}

\subsection{Finite sample behavior of the ML estimator}
\label{sim:MLE}

We first perform a Monte Carlo simulation experiment to test the reliability of the ML estimator in finite samples. We consider the dynamic INAR model specified in (\ref{m1}) and (\ref{m2}) with a Poisson error distribution having mean $\mu$. The experiment consists on generating $1000$ time series of size $T$ from the GAS-INAR model and estimating the parameter vector $\theta=(\omega, \beta, \tau, \mu)^T$  by maximum likelihood.  Different parameter values $\theta$ and different sample sizes $T$ are considered. The simulation results are collected in Table \ref{tab1}. In particular, Table \ref{tab1} reports the mean, the bias, the Standard Deviation (SD) and the square root of the Mean Squared Error (MSE) of the ML estimator obtained from the $1000$ Monte Carlo replications.

\begin{table}[h!]
\centering
\resizebox{0.90\columnwidth}{!}{
\begin{tabular}{llccccccccc}
 \cline{3-11}
\vspace{-0.4cm} \\
 & &  $\omega$ & $\beta$ & $\tau$ & $\mu$ &  & $\omega$ & $\beta$ & $\tau$ & $\mu$ \\
  \hline

\vspace{-0.3cm} \\

\multicolumn{2}{c}{\textbf{True Value}} & \textbf{-0.50} & \textbf{0.90} & \textbf{0.15} & \textbf{6.00} &   & \textbf{-0.50} & \textbf{0.95} & \textbf{0.15} & \textbf{6.00}

\vspace{0.1cm} \\

\multirow{ 4}{*}{$T=250$} & Mean & -0.505 & 0.825 & 0.161 & 5.985 &  & -0.496 & 0.896 & 0.159 & 5.996 \\ 
 & Bias & -0.005 & -0.075 & 0.011 &-0.015 &  & 0.004 & -0.054 & 0.009 & -0.004 \\ 
 & SD & 0.326 & 0.175 & 0.100 & 0.588 &  & 0.411 & 0.117 & 0.097 & 0.570  \\ 
   & $\sqrt{\text{MSE}}$  & 0.326 & 0.190 & 0.101 & 0.588 &  & 0.411 & 0.129 & 0.097 & 0.570 \vspace{0.15cm} \\ 

\multirow{ 4}{*}{$T=500$}   &Mean & -0.496 & 0.868 & 0.153 & 5.986 &  & -0.503 & 0.927 & 0.154 & 5.997 \\ 
 & Bias &0.004 & -0.032 & 0.003 & -0.014 &  & -0.003 & -0.023 & 0.004 & -0.003 \\ 
 & SD & 0.213 & 0.093 & 0.062 & 0.407 &  & 0.246 & 0.053 & 0.053 & 0.393\\ 
   & $\sqrt{\text{MSE}}$ & 0.213 & 0.098 & 0.062 & 0.407 &  & 0.246 & 0.058 & 0.053 & 0.392 \vspace{0.15cm} \\ 

\multirow{ 4}{*}{$T=1000$}  &Mean & -0.494 & 0.885 & 0.151 & 5.987 &  & -0.499 & 0.939 & 0.150 & 5.992\\ 
 & Bias & -0.006 & -0.015 & 0.001 & -0.013 &  & -0.001 & -0.011 & 0.000 & -0.008 \\ 
 & SD & 0.152 & 0.050 & 0.042 & 0.295 &  & 0.171 & 0.034 & 0.035 & 0.279 \\ 
   & $\sqrt{\text{MSE}}$ & 0.152 & 0.052 & 0.042 & 0.295 &  & 0.171 & 0.036 & 0.035 & 0.279
\vspace{0.15cm} \\ 
   \hline

\vspace{-0.3cm} \\
\multicolumn{2}{c}{\textbf{True Value}} & \textbf{-0.50} & \textbf{0.90} & \textbf{0.30} & \textbf{6.00} &   & \textbf{-0.50} & \textbf{0.95} & \textbf{0.30} & \textbf{6.00}

\vspace{0.10cm} \\ 

\multirow{ 4}{*}{$T=250$} & Mean & -0.481 & 0.862 & 0.304 & 5.943 &  & -0.502 & 0.916 & 0.302 & 5.945 \\ 
 & Bias & 0.019 & -0.038 & 0.004 & -0.057 &  & -0.002 & -0.034 & 0.002 & -0.055 \\ 
 & SD & 0.361 & 0.095 & 0.101 & 0.512 &  & 0.501 & 0.066 & 0.097 & 0.473  \\ 
   & $\sqrt{\text{MSE}}$ & 0.361 & 0.103 & 0.101 & 0.514 &  & 0.500 & 0.075 & 0.097 & 0.476  \vspace{0.15cm} \\ 

\multirow{ 4}{*}{$T=500$}   &Mean & -0.495 & 0.883 & 0.297 & 5.971 &  & -0.492 & 0.935 & 0.298 & 5.971 \\ 
    & Bias & 0.005 & -0.017 & -0.003 & -0.029 &  & 0.008 & -0.015 & -0.002 & -0.055 \\ 
 & SD & 0.221 & 0.044 & 0.057 & 0.338 &  & 0.361 & 0.030 & 0.052 & 0.310 \\ 
   & $\sqrt{\text{MSE}}$ & 0.221 & 0.048 & 0.057 & 0.339 &  & 0.361 & 0.033 & 0.052 & 0.311 \vspace{0.15cm} \\ 

 \multirow{ 4}{*}{$T=1000$} &Mean & -0.490 & 0.891 & 0.299 & 5.978 &  & -0.502 & 0.943 & 0.298 & 5.981  \\ 
   & Bias & 0.010 & -0.019 & -0.001 & -0.022 &  & -0.002 & -0.007 & 0.002 & -0.019 \\ 
 & SD & 0.156 & 0.029 & 0.040 & 0.242 &  & 0.233 & 0.019 & 0.035 & 0.219 \\ 
   & $\sqrt{\text{MSE}}$ & 0.156 & 0.031 & 0.040 & 0.243 &  & 0.233 & 0.020 & 0.035 & 0.220  \\ 

   \hline

\end{tabular}
}

\caption{\textit{Summary statistics of the sample ML estimator distribution  for different parameter values $\theta$ and different sample sizes $T$. The statistics in the table are obtained from 1000 Monte Carlo replications.}}
\label{tab1}
\end{table}


The simulation results in Table \ref{tab1} further suggest that the parameter vector $\theta$ can be consistently estimated by maximum likelihood. This can be elicited from the fact that the MSE of the estimator is decreasing as the sample size $T$ increases. We also note that the estimator of the parameter $\beta$ tends to be negatively biased in finite samples. In all the cases considered, the parameter $\beta$ is underestimated on average. The magnitude of the  bias seems also to be relevant as, especially for $T=250$, the square root of the MSE is considerably  larger then the SD. Therefore, this indicates that the bias contribution  to the MSE is not negligible compared to the variance contribution.  The negative bias for $\beta$  is not surprising as the values of $\beta$ considered in the simulations are close to 1 and similar results on the bias  are well known  for ML estimation of linear autoregressive models. As concerns the other parameters, the results suggest that the bias can be considered negligible as the SD is almost equal to the square root of the MSE in all the scenario considered.  

\subsection{Filtering under misspecification}
Score-driven updates  for  time-varying parameters have been shown to be optimal in a  misspecified framework where the aim is to reduce the KL divergence between the postulated model and the true unknown  DGP, see \cite{blasq2015}. This section illustrates the flexibility of the proposed GAS-INAR specification through a simulation study. In this experiment, we consider  different  DGPs of the form
$$y_t= \alpha^o_t \circ y_{t-1} + \varepsilon_t, \;\;\; \varepsilon_t\sim \mathcal{P}(5),$$
where $\mathcal{P}(5)$ denotes a Poisson distribution  with mean equal to 5.  The DGPs differ on the basis of the  specification of the sequence $\{\alpha_t^o\}_{t \in \mathbb{Z}}$. The following four dynamics are considered.
\begin{enumerate}
\item Fast sine:   $\alpha_t^o =0.5 + 0.25\sin(\pi t/100) $.
\item Slow sine:  $\alpha_t^o =0.5 + 0.25\sin( \pi t/250) $.
\item Fast steps: $\alpha_t^o=0.25I_{[-1,0]}\left(\sin( \pi t/100)\right)+0.75I_{(0,1]}\left(\sin( \pi t/100)\right) $.
\item Slow steps:  $\alpha_t^o=0.25I_{[-1,0]}\left(\sin( \pi t/250)\right)+0.75I_{(0,1]}\left(\sin( \pi t/250)\right) $.
\end{enumerate}
where $I_A(x)=1$ if $x\in A$ and $I_A(x)=0$ otherwise. The DGPs are thus Poisson INAR models where the coefficient $\alpha_t^o$ is allowed to change in different ways. The red lines in Figure \ref{filtersim} show the path of $\alpha_t^o$, $t=1,\dots, 500$, for the four different DGPs. As we can see, the fast sine and the slow sine specifications allow the coefficient to change smoothly over time, whereas, the fast step and slow step specifications exhibit abrupt changes over time.

\begin{table}[ht]
\centering
\begin{tabular}{lcccc}
   & \multicolumn{4}{c}{\textbf{Square root MSE}} \\ 
  \vspace{-0.5cm}\\
    \cline{2-5}
  \vspace{-0.4cm}\\
 & Fast sine  & Slow sine & Fast steps & Slow steps \\ 
  \hline
  \vspace{-0.35cm}\\
 INAR  & 0.242 &  0.257 & 0.322 & 0.356 \\ 
  \vspace{-0.4cm}\\
 rc-INAR &  0.112 & 0.111 & 0.145 & 0.132  \\ 
  \vspace{-0.4cm}\\
 GAS-INAR  & \textbf{0.077} &   \textbf{0.060} &  \textbf{0.101} &  \textbf{0.072}  \\ 
  \vspace{-0.4cm}\\
\hline
  \vspace{-0.2cm}\\
   & \multicolumn{4}{c}{\textbf{KL divergence}} \\ 
  \vspace{-0.5cm}\\
    \cline{2-5}
  \vspace{-0.4cm}\\
 & Fast sine  & Slow sine & Fast steps & Slow steps \\ 
  \hline
  \vspace{-0.35cm}\\
 INAR  & 0.238 & 0.253 & 0.412 & 0.442 \\ 
  \vspace{-0.4cm}\\
 rc-INAR & 0.117 & 0.114 &0.212 & 0.185 \\ 
  \vspace{-0.4cm}\\
 GAS-INAR  & \textbf{0.053} &   \textbf{0.029}&  \textbf{0.128} & \textbf{0.057} \\ 
  \vspace{-0.4cm}\\
  \hline
\end{tabular}
\caption{\textit{Average MSE and KL divergence between the true DGP and the different models. } }
\label{sim:filter}
\end{table}
\begin{figure}[h!]
\center
\includegraphics[scale=0.60]{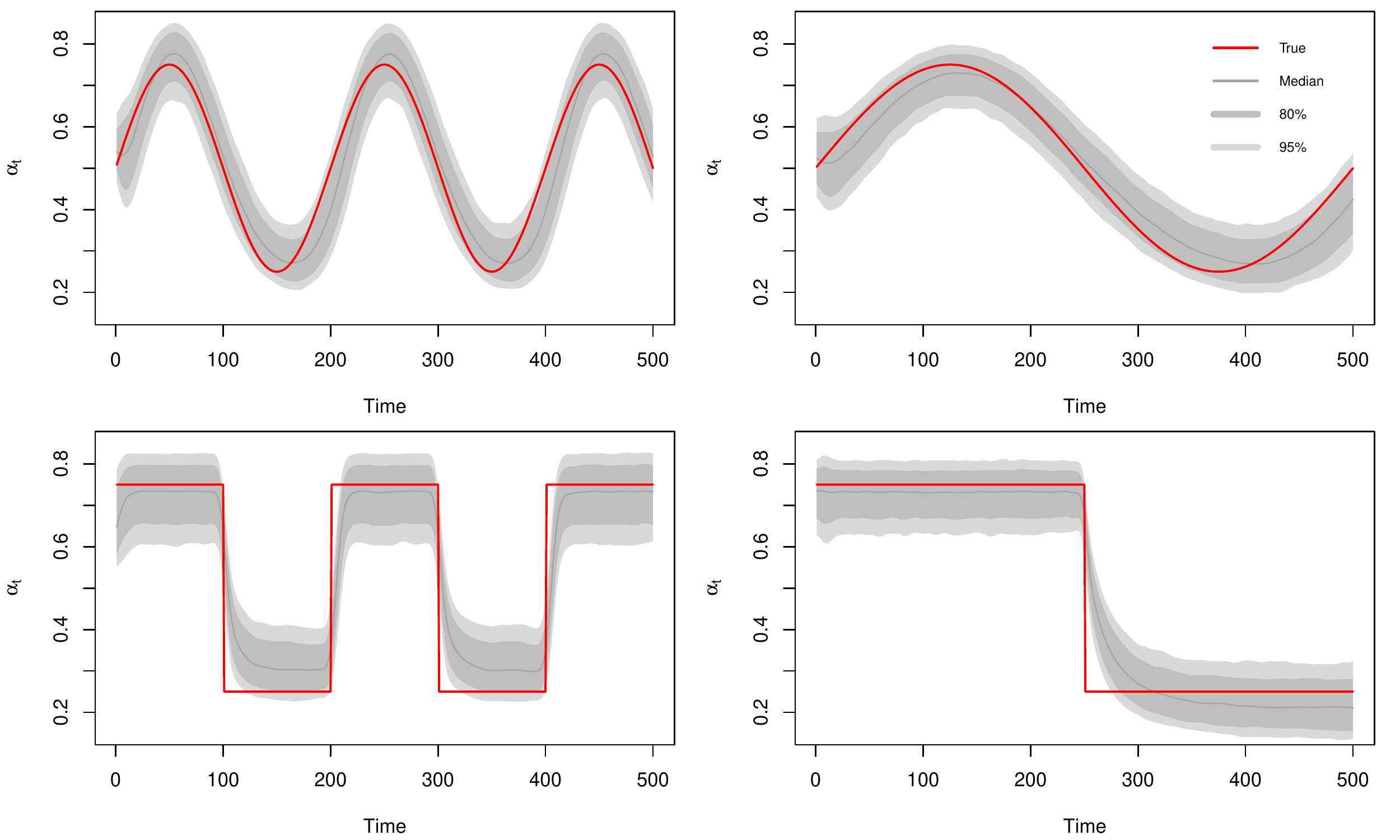}

\caption{\textit{The red line denotes the true path $\alpha_t^o$. The gray area represents confidence bounds of the filtered path of $\alpha_t$ for the GAS-INAR model. The first plot is for the fast sine configuration, the second is for the slow sine, the third is for the fast steps and the last is for the slow steps specification. }  }
\label{filtersim}
\end{figure}

The simulation  experiment consists on  generating $1000$  Monte Carlo time series draws  of size $T=500$ from the different DGPs. For each draw, the following models are estimated: a Poisson INAR model with static coefficient, the  GAS-INAR model  with Poisson error terms and a Poisson INAR model with dynamic coefficient as considered  in \cite{zheng2008first}. For the latter model the dynamic survival probability is given by $\logit \alpha_t=\omega+\tau y_{t-1}$, where $\omega$ and $\tau$ are parameters to be estimated. The model of \cite{zheng2008first} is denoted as rc-INAR.  The performances of the models is measured in terms of approximation of the true conditional pmf and the true survival probability $\alpha_t^o$. As concerns pmf approximation, we compute the KL divergence between the true pmf and the estimated one. Whereas, as concerns $\alpha_t^o$, we consider the MSE between $\alpha_t^o$ and the estimated survival
 probability. Table \ref{sim:filter} reports the results of the simulation experiment. As we can see, the GAS-INAR model has the pest performance in terms of both KL divergence and MSE. This is true for all the DGPs considered. We also note that the better performance of the GAS-INAR model is relevant in relative terms. In particular, the KL divergence and MSE from the GAS-INAR model are about half of those from the rc-INAR model and about one third of those from the INAR model.    These results show the flexibility of the GAS-INAR model and its ability to approximate complex DGPs.

Figure \ref{filtersim} further illustrates the ability of the GAS-INAR specification to capture the dynamic behavior of the true $\alpha_t^o$ in the different settings considered. The gray areas in the plots represent  $95\%$ variability bounds for the estimated paths of $\alpha_t$ and the red lines denote the true paths $\alpha_t^o$. As we can see, in the fast sine and slow sine configurations, the true path $\alpha_t^o$ is always inside the $95\%$ confidence bounds. This shows the ability of the GAS-INAR model to capture smooth changes in $\alpha_t^o$. On the other hand, in the fast steps and slow steps configurations, the true  $\alpha_t^o$  is not  inside the confidence bounds right after the sudden changes in the level of $\alpha_t^o$. This is natural as the filtered path requires some time periods before adapting to the break in the level of $\alpha_t^o$. However, also in this situation, we can see how the estimated paths from the GAS-INAR model are able to approximate reasonably well the true $\alpha_t^o$.

\section{Application to crime data}
\label{section4}

\subsection{In-sample results}
We present an empirical illustration of the proposed methodology to the monthly  number of offensive conduct reports   in the city of Blacktown, Australia, from January 1995 to December 2014. The time series is    from the New South Wales  dataset of police reports and it is available at \url{http://data.gov.au/}.
\begin{figure}[h!]
\center
\includegraphics[scale=0.55]{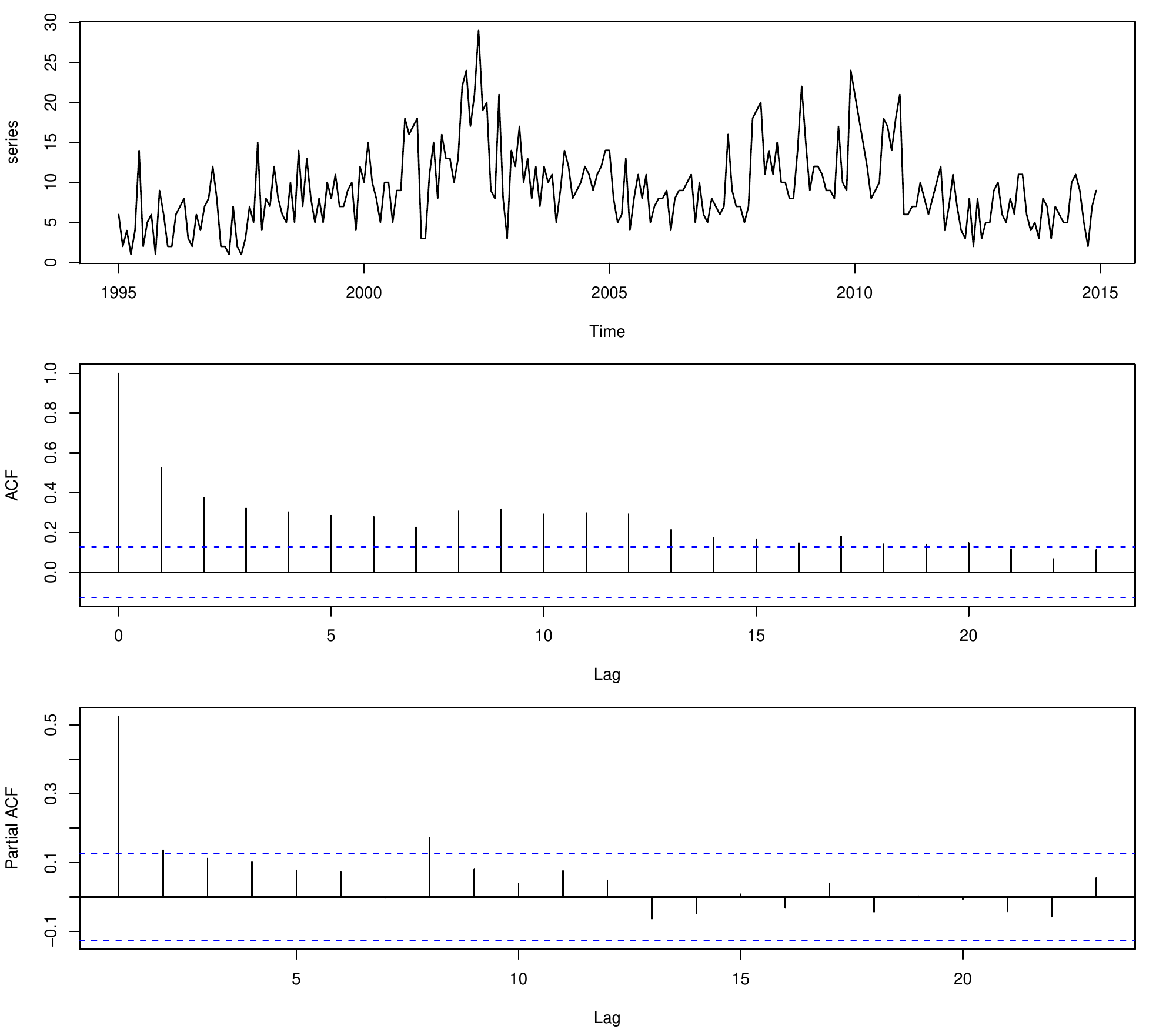}
\vspace{-0.2cm}
\caption{\textit{The first plot shows the monthly number of offensive conduct reports in Blacktown from January 1995 to December 2014. The second and third plots represent the sample autocorrelation functions of the series.}}
\label{f1}
\end{figure}
 Figure \ref{f1} shows the plot of the series. As we can see, there are  two time periods with a particular high level of criminal activities. The first is around 2002 and the second is around 2010. During these periods  we  expect the estimated survival probability $\alpha_t$ to be higher as they can be seen as periods of high persistence.
As discussed in \cite{jin1991}, INAR(p) models have the same autocorrelation structure of continuous-valued AR(p) models. The sample  autocorrelation functions  in Figure \ref{f1}  suggest that a first-order INAR model should be appropriate for this dataset.  We consider several model specifications: the  INAR and the GAS-INAR model with  Poisson and Negative Binomial  error distribution.   The  sample mean of the data  is $9.3$ and the sample variance  is $24.3$.  This is an indication that   there is overdispersion in the data and thus a Negative Binomial distribution for the error term may be more suited.   The different  specifications employed are summarized in Table \ref{spec}.

\begin{table}[h!]
\centering
\resizebox{0.95\columnwidth}{!}{
\begin{tabular}{ll}
  \cline{2-2}
\vspace{-0.4cm}\\
  & \multicolumn{1}{c}{Model description}   \\ 
  \hline
\vspace{-0.35cm}\\
GAS-NBINAR & Model in (\ref{m1}) and  (\ref{m2}) with Negative Binomial error of mean $\mu$ and variance $\sigma^2$.  \\ 
\vspace{-0.4cm}\\
NBINAR & Model in (\ref{inar}) with Negative Binomial error of mean $\mu$ and variance $\sigma^2$.  \\ 
\vspace{-0.4cm}\\
GAS-PoINAR & Model in (\ref{m1}) and  (\ref{m2}) with Poisson error of mean $\mu$.   \\ 
\vspace{-0.4cm}\\
PoINAR &  Model in (\ref{inar}) with Poisson error of mean $\mu$.   \\ 
   \hline
\end{tabular}
}

\caption{\textit{The table describes the specification of each model.}}
\label{spec}
\end{table}

\begin{table}[ht]
\centering
\begin{tabular}{lcccccccc}
 \cline{2-9}
\vspace{-0.4cm} \\
 & $\omega$ & $\beta$ & $\tau$ & $\mu $ & $\sigma^2 $  & log-lik & pvalue & AIC \\ 
 \hline \vspace{-0.35cm} \\
GAS-NBINAR  & -0.907 & 0.965 & 0.135 & 6.083 & 14.155& -662.91 &0.002&  1335.82 \vspace{-0.15cm} \\ 
  & \footnotesize{(0.338)} & \footnotesize{(0.027)} & \footnotesize{(0.055)} & \footnotesize{(0.481)} &\footnotesize{(1.853)}&  &&  \vspace{0.1cm}\\ 
 NBINAR  & -0.401 & - & - & 5.586 & 15.265 & -669.03 &-& 1344.07  \vspace{-0.15cm} \\  
  & \footnotesize{(0.176)} &  &  & \footnotesize{(0.456)}  &\footnotesize{(2.125)}&  &&  \vspace{0.1cm}\\ 
GAS-PoINAR  & -1.258 & 0.967 & 0.141 & 6.539& - & -695.04 &0.000& 1398.24 \vspace{-0.15cm} \\ 
  & \footnotesize{(0.294)} & \footnotesize{(0.019)} & \footnotesize{(0.033)} & \footnotesize{(0.313)} && & &  \vspace{0.1cm}\\ 
 PoINAR  & -0.613& - & - &  6.046 & - & -714.58 &-& 1433.21  \vspace{-0.15cm} \\  
  & \footnotesize{(0.140)} &  &  & \footnotesize{(0.323)} && & &  \\ 
   \hline
\end{tabular}

\caption{\textit{ML estimate of the models in Table \ref{spec}. The last three columns contain respectively the log-likelihood, the pvalue of the likelihood ratio test between the GAS-INAR models and their static INAR counterparts  and the AIC.}}
\label{tab2}
\end{table}

 The ML estimation  results    are collected in Table \ref{tab2}.  We consider the likelihood ratio test to check the significance of the dynamic coefficient $\alpha_t$. Given its meaningful interpretation in a misspecified framework, we  also report the Akaike Information Criterion (AIC) as a means of comparison among non-nested models.  The results suggest that the inclusion of the dynamic specification for $\alpha_t$  plays a relevant role as confirmed  by  the likelihood ratio test and the AIC.  The likelihood ratio test shows that the dynamic coefficient is highly significant for both the Poisson and the Negative Binomial specifications. Overall the model with the smallest AIC is the GAS-NBINAR model.  Furthermore, for both the  Negative Binomial models, the estimated variance of the error term is more than double the estimated mean.  We can thus say that the Negative Binomial distribution seems to provide a better fitting than the Poisson. This result   is  also coherent with the overdispersion observed in the data. We  can conclude that  the results indicate a better in-sample performance for the GAS-INAR model.

 From Table \ref{tab2}, we also note that the time-varying parameter $\alpha_t$ is highly persistent as the estimated $\beta$ is close to 1. The estimated path of $\alpha_t$  together with $80\%$ and $95\%$ confidence bounds  is plotted in Figure \ref{at}. As expected, the  survival probability is particularly high around 2002 and around 2010. This reflects the high level of criminal activities  that can be interpreted as an higher survival probability of past elements. The plot in Figure \ref{at} also highlights that there is a relevant difference in considering a static $\alpha$ instead of a dynamic $\alpha_t$. This can be noted from the fact that the dashed line, which denotes the static parameter estimate of $\alpha$,  lies outside the 95\% confidence bounds of $\alpha_t$ in some time periods.


\begin{figure}[h!]
\center
\includegraphics[scale=0.65]{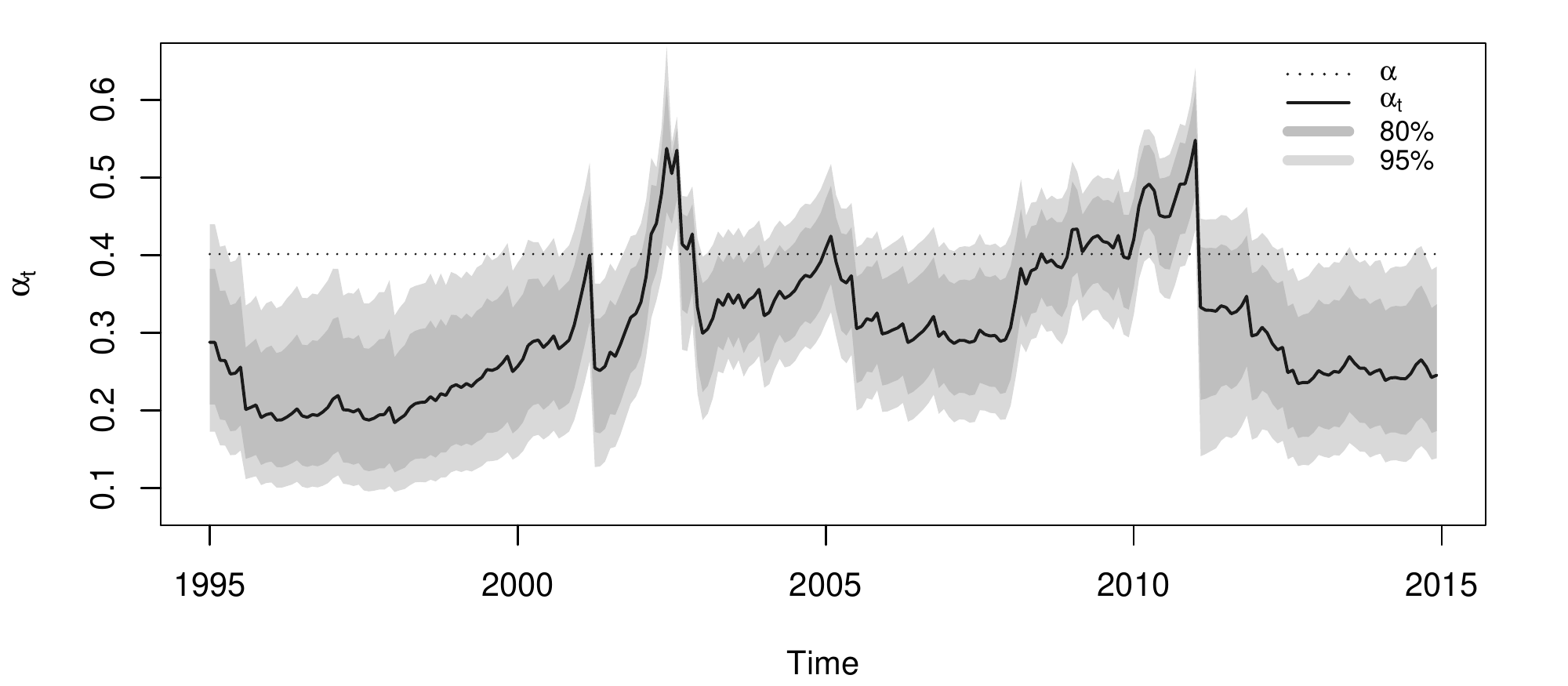}
\vspace{-0.2cm}
\caption{\textit{Estimated  $\alpha_t$ from the GAS-NBINAR model with 80\% and 95\% confidence bounds. The dashed line represents the estimate of $\alpha$  from the NBINAR model. The confidence bounds are obtained simulating from the  distribution of the ML estimator as proposed by \cite{cb2016}. }}
\label{at}
\end{figure}


\subsection{Forecasting results}

Finally we perform a pseudo out-of-sample experiment to compare the forecasting performances of the models. The full sample size of the series is  240 observations. We split it into  two subsamples: the first 140 observations are considered as a training sample and the last 100  observations as a forecasting evaluation sample. The training sample is then expanded recursively. We evaluate the forecast performance of the models in terms of both point forecast and pmf forecast. The point forecast accuracy is evaluated by the forecast MSE, i.e.~$100^{-1}\sum_{i=1}^{100} (\hat y_{T+i}-y_{T+i})^2 $. Whereas, the pmf forecast accuracy is evaluated by the log score criterion, i.e. $100^{-1}\sum_{i=1}^{100}\log \hat p_{T+i}(y_{T+i})$. The log score criterion provides a means of comparison based on the KL divergence between  the true DGP and the estimated models.  

\begin{table}[h!]
\centering
\begin{tabular}{lcccccc}

&  \multicolumn{6}{c}{{\textbf{Mean squared error}}} \\ 
\vspace{-0.5cm} \\
 \cline{2-7}
\vspace{-0.4cm} \\
 & $h=1$ & $h=2$ & $h=3$ & $h=4$ & $h=5$ & $h=6$ \\ 
   \hline \vspace{-0.35cm} \\
GAS-NBINAR & \textbf{15.77} & \textbf{20.15} & \textbf{20.56} & \textbf{21.51} & \textbf{21.36} & \textbf{21.23} \vspace{0.1cm}\\  
NBINAR & 16.51 & 21.47 & 22.61 & 23.70 & 23.85 & 23.72  \vspace{0.1cm}\\ 
GAS-PoINAR & 16.33 & 20.66 & 21.18 & 21.98 & 21.82 & 21.52 \vspace{0.1cm}\\  
PoINAR & 17.00 & 21.82 & 22.86 & 23.79 & 23.91 & 23.78  \vspace{0.1cm}\\ 
\hline 
\vspace{-0.3cm} \\
&  \multicolumn{6}{c}{\textbf{{Log score criterion}}} \\ 
 \cline{2-7}
\vspace{-0.4cm} \\

 & $h=1$ & $h=2$ & $h=3$ & $h=4$ & $h=5$ & $h=6$ \\ 
   \hline \vspace{-0.35cm} \\
GAS-NBINAR & \textbf{-2.73} & \textbf{-2.82} & \textbf{-2.83} & \textbf{-2.85} & \textbf{-2.85} & \textbf{-2.85} \vspace{0.1cm}\\   
NBINAR & -2.75 & -2.85 & -2.88 & -2.91 & -2.91 & -2.91  \vspace{0.1cm}\\  
GAS-PoINAR & -2.83 & -2.96 & -2.98 & -3.00 & -3.00 & -2.98 \vspace{0.1cm}\\   
PoINAR & -2.88 & -3.08 & -3.12 & -3.18 & -3.19 & -3.18\\ 
   \hline
\end{tabular}
\caption{\textit{Forecast MSE and log score criterion computed in the last 100 observations for different forecast horizons.}}
\label{table:forecasts}
\end{table}

The results  are collected in Table \ref{table:forecasts}. As we can see, the inclusion of the dynamic survival probability $\alpha_t$ provides better forecast performances in the subsample considered. In particular, the GAS-NBINAR model outperforms the NBINAR model   in terms of  both point forecasts and pmf forecasts. The same happens for the GAS-PoINAR compared to the PoINAR. This holds true for all forecast horizons considered.  Furthermore,   the use of the Negative Binomial distribution is particularly relevant to improve the pmf forecasts. In particular, the Negative Binomial models dominate the Poisson models in terms of log-score criterion. This result is quite natural as the the Negative Binomial models take into account the overdispersion in the data.  On the other hand, as concerns the point forecasts, the dynamic parameter $\alpha_t$ seems to play a mayor role in improving the forecast performances. This can be noted as the models with dynamic $\alpha_t$ dominate the models with static $\alpha$ in terms on MSE. The best performing model is the GAS-NBINAR for both criteria and  all forecast horizons. This  suggests that the flexibility introduced by $\alpha_t$ as well as the choice of an appropriate distribution for the error term can be important to better predict future observations. Overall, these out-of-sample results together with the in-sample results show that the GAS-INAR models can be useful in practical applications.

\section{Conclusion}
\label{section5}
In this paper, we have proposed a flexible  INAR model with dynamic survival probability. This model should be interpreted as a filter to approximate  unknown DGPs. Empirical results are promising as illustrated in the empirical experiments  considering   both  simulated data  and   real data. Future  research may include the extension of the first-order dynamic INAR model to a general order $p$. Other work to be done concerns the asymptotic theory of the ML estimator. At the moment, we have only proved the consistency of the estimator. The asymptotic normality requires the study of the first two derivatives of the log likelihood. In this regard, we encountered some difficulties concerning the existence of some moments for the derivative processes.


\appendix

\section{Appendix}

\subsection{Derivatives of the predictive log-likelihood}
\label{appendixA1}

Defining $s_t(\alpha, \xi):=\partial \log p(y_t| \alpha,y_{t-1},\xi)/\partial \logit \alpha$ and $\dot s_t(\alpha, \xi):=\partial s_t(\alpha, \xi)/\partial \logit\alpha$,  by elementary calculus, we obtain that 
\begin{align}
\label{first_der}
s_t(\alpha,\xi)=\left(\sum_{k=0}^{m_t}p_{kt}(\alpha,\xi)\right)^{-1}\left(\sum_{k=0}^{m_t} p_{kt}(\alpha,\xi)(k-y_{t-1}\alpha) \right),
\end{align}
and
\begin{align}
\label{second_der}
\dot s_t(\alpha,\xi)=&\left(\sum_{j=0}^{m_t}\sum_{k=0}^{m_t}p_{kt}(\alpha,\xi)p_{jt}(\alpha,\xi)\right)^{-1}\times \nonumber \\
&\qquad\left(\sum_{j=0}^{m_t}\sum_{k=0}^{m_t}p_{kt}(\alpha,\xi)p_{jt}(\alpha,\xi)\left(k(k-j)-\alpha(1-\alpha)y_{t-1}\right)\right),
\end{align}
where $m_t=\min(y_t,y_{t-1})$ and 
\begin{align*}
p_{kt}(\alpha,\xi)= {y_{t-1}\choose k} \alpha^k (1-\alpha)^{y_{t-1}-k}p_e(y_{t}-k, \xi).
\end{align*}

\subsection{Proofs}

\begin{proof}[Proof of Proposition \ref{proposition1}]

The stability conditions we consider to obtain the convergence result are based on Theorem 3.1 of \cite{Bougerol1993}.   \cite{SM2006} applied  Bougerol's theorem in the space of continuous functions $\mathbb{C}(\Theta,\mathbb{R})$ equipped with the uniform  norm $\|\cdot\|_{\Theta}$. In particular, they provide  stability conditions for functional SRE of the form
\begin{equation}\label{sre_app}
x_{t+1}(\theta)=\phi_t(x_t(\theta),\theta), \; t \in \mathbb{N},
\end{equation}
where $x_0(\theta)\in \mathbb{R}$, the map  $(x,\theta) \mapsto \phi_t(x,\theta)$ from $\mathbb{R}\times \Theta $ into $\mathbb{R}$ is almost surely continuous and   the sequence $\{\phi_t(x,\theta)\}_{t\in \mathbb{Z}}$ is stationary and ergodic for any $(x,\theta)\in \mathbb{R}\times \Theta $.   \cite{Win2013} weakened \cite{SM2006} conditions   replacing a uniform contraction condition with a pointwise condition.  The uniform e.a.s~convergence of a filter satisfying the SRE in (\ref{sre_app}) can be obtained on the basis of Theorem 2 of \cite{Win2013} from the following conditions:
\begin{description}
\item[(a)] There exists an $x \in \mathbb{R}$ such that  $E \log^+\left(\sup_{\theta\in\Theta}|\phi_0(x,\theta)|\right)<\infty$,
\item[(b)] $E \log^+\left(\sup_{\theta\in\Theta}\Lambda_0(\theta)\right)<\infty$,
\item[(c)] $E \log\left(\Lambda_0(\theta)\right)<0$ for any $\theta \in \Theta$,
\end{description}
where the random coefficient $\Lambda_t(\theta)$ is defined as
$$\Lambda_t(\theta)=\sup_{(x_1,x_2)\in \mathbb{R}^2,x_1\neq x_2}\frac{|\phi_t(x_1,\theta)-\phi_t(x_2,\theta)|}{|x_1-x_2|}.$$

In our case, the random function  $\phi_t$ that defines the SRE in (\ref{sre_app}) has the following form
  $$\phi_t(x,\theta)=\omega+\beta x + \tau s_t\left(\logit^{-1} (x),\xi\right).$$
First we note that our SRE satisfies the stationarity and  continuity requirements to apply Wintenberger's results. In particular, we obtain that the a.s.~continuity of $\phi_t(x,\theta)$ follows immediately from the a.s.~continuity of $(x,\theta) \mapsto  s_t\left(\logit^{-1} (x),\xi\right)$, which is implied by Assumption \ref{assumption1}, and the continuity of the Binomial likelihood (see the functional form of  $s_t$ in (\ref{first_der})). Furthermore, the stationarity and ergodicity of  $\{\phi_t\}_{t\in \mathbb{Z}}$ follows from the stationarity and ergodicity of $\{y_t\}_{t\in \mathbb{Z}}$ together with an application of Proposition 4.3 of \cite{krengel1985} as $s_t\left(\logit^{-1} (x),\xi\right)$ is a measurable function of $y_t$ and $y_{t-1}$. In the following,   we will prove the proposition by showing that conditions (a)-(c) are satisfied.

As concerns (a), setting $x=0$ and accounting that $E y_0^2<0$, by an application of Lemma \ref{lemma1},  we obtain that 
\begin{align*}
E \log^+\left(\sup_{\theta\in\Theta}|\phi_0(x,\theta)|\right) &\le \sup_{\theta\in\Theta}|\omega| +\sup_{\theta\in\Theta}|\tau|  E\sup_{\theta\in\Theta}|s_t\left(0.5,\xi\right)|\\
&\le \sup_{\theta\in\Theta}|\omega| +\sup_{\theta\in\Theta}|\tau|  E|y_{t-1}|<\infty.
\end{align*}
Thus (a) is proved.

As concerns (b), by an application of Lemma \ref{lemma1}, we have  that
\begin{align*}
E \log^+\left(\sup_{\theta\in\Theta}\Lambda_0(\theta)\right)&\le E \sup_{\theta\in\Theta}\sup_{x\in\mathbb{R}}|\partial \phi_0(x,\theta)/\partial x|\le \sup_{\theta\in\Theta}|\beta| +\sup_{\theta\in\Theta}|\tau|  E\sup_{\theta\in\Theta}\sup_{\alpha\in(0,1)}|\dot s(\alpha,\xi)|\\
&\le \sup_{\theta\in\Theta}|\beta| +\sup_{\theta\in\Theta}|\tau|  E|y_{t-1}^2|<\infty,
\end{align*}
as $Ey_0^2<\infty$. This shows that (b) holds true.

Finally, as concerns (c), by condition (\ref{contraction}) we obtain for any $\theta \in \Theta$
\begin{align*}
E \log\left(\Lambda_0(\theta)\right)&\le E \sup_{x\in\mathbb{R}}|\partial \phi_0(x,\theta)/\partial x|\le  E\sup_{\alpha\in(0,1)}|\beta +\tau \dot s(\alpha,\xi)|<0.
\end{align*}
This proves (c) and concludes the proof of the proposition.

\end{proof}


\begin{proof}[Proof of Proposition \ref{proposition2}]

The result follows immediately by an application of Lemma \ref{lemma1}, which provides an upper bound for the derivative of the score.

\end{proof}


\begin{proof}[Proof of Theorem \ref{theorem1}]

Assumption \ref{assumption2} ensures that $L(\theta)=E l_t(\theta)$ has a unique maximizer in the compact set $\Theta$, which indeed corresponds to the pseudo-true parameter $\theta^*$ that minimizes $KL(\theta)$ as $E|\log p^o(y_t|y^{t-1})|<\infty$ is satisfied by assumption. In the following, we show that the log-likelihood function $\hat L_T(\theta)$ converges almost surely uniformly in $\Theta$ to $L(\theta)$, namely
\begin{align}
\label{A_uniconv}
\|\hat L_T-L\|_\Theta\xrightarrow{a.s.}0, \;\; T\rightarrow\infty.
\end{align}
Then, given the compactness of $\Theta$ and the identifiability of $\theta^*$, the almost sure convergence $\hat\theta_T\xrightarrow{a.s.}\theta^*$ follows by well known standard arguments due to \cite{wald1949}.

Defining $L_T(\theta)=T^{-1}\sum_{t=1}^Tl_t(\theta) $, with $l_t(\theta)=\log p(y_t|\tilde\alpha_t(\theta),y_{t-1},\xi)$, an application of the triangle inequality yields
\begin{align}
\label{A_triangle}
\|\hat L_T-L\|_\Theta\le \|\hat L_T-L_T\|_\Theta+\| L_T-L\|_\Theta.
\end{align}
Therefore, the uniform convergence in (\ref{A_uniconv}) follows if both terms on the right hand side of the inequality (\ref{A_triangle}) converge almost surely to zero.

First we show that $\|\hat L_T-L_T\|_\Theta\xrightarrow{a.s.}0$. An application of the mean value theorem together with Lemma \ref{lemma1} yields
 \begin{align*}
|\hat l_t(\theta)-l_t(\theta)|&\le \sup_{\alpha\in (0,1)}|s_t(\alpha,\xi)||\logit\hat\alpha_t(\theta)-\logit \tilde \alpha_t(\theta)|\\
&\le y_{t-1}|\logit\hat\alpha_t(\theta)-\logit \tilde \alpha_t(\theta)|
\end{align*}
for any $\theta \in \Theta$ and $t \in \mathbb{N}$. Furthermore,  taking into account that $\|\logit\hat\alpha_t-\logit \tilde \alpha_t\|_\Theta\xrightarrow{e.a.s.}0$ by Proposition \ref{proposition1}  and  that $E|y_{t-1}|<\infty$ holds true by assumption, an application of Lemma 2.1 of \cite{SM2006} yields
 $$\sum_{t=1}^\infty y_{t-1}\|\logit\hat\alpha_t-\logit \tilde \alpha_t\|_\Theta< \infty$$
 almost surely.
As a result, we have that $T^{-1}\sum_{t=1}^T \|\hat l_t-l_t\|_\Theta\xrightarrow{a.s.}0$ and therefore we conclude that the desired result $\|\hat L_T-L_T\|_\Theta\xrightarrow{a.s.}0$ is proved as 
$$\|\hat L_T-L_T\|_\Theta\le T^{-1}\sum_{t=1}^T \|\hat l_t-l_t\|_\Theta.$$

We are now left with showing that $\| L_T-L\|_\Theta\xrightarrow{a.s.}0$.  Note that $\{l_t\}_{t\in \mathbb{N}}$ is a stationary and ergodic sequence of random elements that takes values in the space continuous functions  $\mathbb{C}(\Theta,\mathbb{R})$ equipped with the uniform norm $\|\cdot\|_\Theta$.  Therefore, the desired convergence result follows by an application of the ergodic theorem of \cite{rao1962} provided that  the uniform  integrability condition $E\|l_t\|_\Theta<\infty$ is satisfied. In the following, we show that this condition holds true.
First, note that $ l_t(\theta)\le0$ with probability 1 for any $\theta \in \Theta$ as $p(y_1| \alpha, y_2, \xi)\le 1$ for any $(y_1,y_2,\xi,\alpha)\in\mathbb{N}^2\times \Xi \times (0,1)$. Thus, accounting that $\log(1+\exp(x))\le1+|x|$ for any $x\in\mathbb{R}$, we obtain 
 \begin{align*}
 |l_t(\theta)|&=-l_t(\theta)=-\log \sum_{k=0}^{m_t} p_{kt}(\tilde \alpha_t(\theta),\xi)\le -\log  p_{0t}(\tilde \alpha_t(\theta),\xi)\\
& \le-y_{t-1}\log(1-\tilde\alpha_t(\theta))-\log p_e(y_{t-1},\xi)\\
& \le y_{t-1}\log(1+\exp(\logit\tilde\alpha_t(\theta)))-\log p_e(y_{t-1},\xi)\\
& \le y_{t-1}(1+|\logit\tilde\alpha_t(\theta)|)-\log p_e(y_{t-1},\xi),
\end{align*}
 almost surely for any $\theta \in \Theta$. Finally, an application of the Cauchy-Schwarz inequality yields
$$\|l_t\|\le E y_t + E y_t^2 +\|\logit\tilde\alpha_t\|_\Theta^2+E\sup_{\theta\in\Theta}|\log p_e(y_{t-1},\xi)|<\infty,$$
where $ E y_t^2<\infty$ and $E\sup_{\theta\in\Theta}|\log p_e(y_{t-1},\xi)|<\infty$ are satisfied by assumption and $\|\logit\tilde\alpha_t\|_\Theta^2<\infty$ follows by an application of Lemma \ref{lemma2}.
\end{proof}


\begin{proof}[Proof of Lemma \ref{lemma1.0}]
The proof of this result is an immediate consequence of  Theorem 3 of \cite{Win2013}. We simply  sketch the main steps  to illustrate that all conditions needed are satisfied. The same notation and definitions as in the proof of Proposition \ref{proposition1} are considered.
First note that it is sufficient to show that  $|\logit \tilde\alpha_t (\hat \theta_T)-\logit \tilde\alpha_t^*|\xrightarrow{a.s.}0$ as $T\rightarrow \infty$. This because we have 
$$|\logit \hat \alpha_t (\hat \theta_T)-\logit \tilde\alpha_t^*|\le|\logit \tilde\alpha_t (\hat \theta_T)-\logit \tilde\alpha_t^*|+\|\logit \hat\alpha_t-\logit \tilde\alpha_t\|_\Theta,$$
 and $\|\logit \hat\alpha_t-\logit \tilde\alpha_t\|_\Theta\xrightarrow{a.s.}0$ from Proposition \ref{proposition1}.
From the results in Theorem 2 of \cite{Win2013} and the assumptions considered in Proposition \ref{proposition1},  we have that for any $\theta \in \Theta$  there exists a compact neighborhood $B(\theta)$ of $\theta$ such that the contraction condition holds uniformly, namely $E\log (\|\Lambda_t\|_{B(\theta)})<0$. Therefore, this is true also for the pseudo-true parameter $\theta^*\in \Theta$. As in the proof of Theorem 3 of \cite{Win2013},  repeated applications of the mean value theorem yield
$$\|\logit \tilde\alpha_t (\cdot)-\logit \tilde\alpha_t^*\|_{B(\theta^*)}\le\sum_{k=1}^\infty \prod_{i=1}^k\|\Lambda_{t-i}\|_{B(\theta^*)} \|\phi_{t-k}(\logit \tilde\alpha_{t-k}^* ,\cdot)-\logit \tilde\alpha_{t-k+1}^*\|_{B(\theta^*)}$$
for any $\theta\in B(\theta^*)$ with probability 1. The existence of the limit on the right hand side is obtained from Lemma 2.1 of \cite{SM2006} together with  the integrability condition $E\log^+\|\logit\tilde \alpha_t\|_{B(\theta^*)}$, implied by Lemma \ref{lemma2}, and $\prod_{i=1}^k\|\Lambda_{t-i}\|_{B(\theta^*)}  \xrightarrow{e.a.s.}0$ as $k \rightarrow \infty$, implied by the uniform contraction condition.
Finally, the desired result $|\logit \tilde\alpha_t (\hat \theta_T)-\logit \tilde\alpha_t^*|\xrightarrow{a.s.}0$  follows as in Theorem 3 of \cite{Win2013} taking into account that   the ML estimator $\hat \theta_T $ is strongly consistent by Theorem \ref{theorem1}.
\end{proof}

\begin{proof}[Proof of Theorem \ref{theorem2}]
An application of the mean value theorem  together with Lemma \ref{lemma4} yields that for any $x\in \mathbb{N}$ there is a $C_x>0$  and a   stationary sequence of random variables $\{\eta_t\}_{t\in \mathbb{N}}$ such that the following inequalities hold true with probability 1
\begin{align*}
|\hat p_t(x,\hat \theta_T)- p^*_t(x)|\le& \sup_{(\alpha,\theta)\in (0,1) \times \Theta}\left|\frac{\partial p(x|y_{t-1},\alpha,\xi)}{\partial \logit\alpha}\right| \left|\logit\hat \alpha_t(\hat \theta_T)-\logit\alpha_t^*\right|+ \\ 
+&\sup_{(\alpha,\theta)\in (0,1) \times \Theta}\left\|\frac{\partial p(x|y_{t-1},\alpha,\xi)}{\partial \xi}\right\|_1\|\hat\xi_T-\xi^*\|_1 \\
\le &  \eta_t| \logit   \hat\alpha_t(\hat \theta_T)-\logit \alpha_t^*| +C_x\|\hat\xi_T-\xi^*\|_1.
\end{align*}
The desired convergence to zero in probability of $|\hat p_t(x,\hat \theta_T)- p^*_t(x)|$  then follows immediately as $\|\hat\xi_T-\xi^*\|_1$ is $o_p(1)$ by  Theorem \ref{theorem1} and $|\logit \hat \alpha_t(\hat \theta_T)-\logit\alpha_t^*|$ is  $o_p(1)$ by Lemma \ref{lemma1.0}.
\end{proof}

\subsection{Technical lemmas}

\begin{lemma}
\label{lemma1}
Let Assumption \ref{assumption1} hold, then the following inequalities are satisfied with probability 1 for any $\alpha \in (0,1)$ and  $\xi \in \Xi$ 

\begin{description}
\item[(i)] $\left|s_t(\alpha, \xi)\right|\le 2y_{t-1}.$
\item[(ii)] $- y_{t-1}/4 \le \dot s_t(\alpha, \xi) \le m_t^2.$
\end{description}
\end{lemma}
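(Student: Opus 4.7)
The plan is to interpret both the first and second derivative formulas probabilistically. Define the auxiliary probability measure $Q$ on $\{0,1,\ldots,m_t\}$ that assigns mass proportional to $p_{kt}(\alpha,\xi)$ to the point $k$, so that the ratios appearing in the definitions become expectations under $Q$. Concretely,
\[
s_t(\alpha,\xi) = E_Q[K] - y_{t-1}\alpha, \qquad \dot s_t(\alpha,\xi) = E_{Q\otimes Q}\bigl[K(K-J)\bigr] - \alpha(1-\alpha)y_{t-1},
\]
where $K,J$ are independent copies drawn from $Q$. This reformulation reduces both bounds to elementary facts about a bounded random variable.

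For part (i), I would use $0 \le K \le m_t \le y_{t-1}$ pointwise and $0 \le y_{t-1}\alpha \le y_{t-1}$, so that $|K - y_{t-1}\alpha| \le y_{t-1}$ almost surely under $Q$. Jensen's inequality (or simply the triangle inequality applied inside the expectation) then yields $|s_t(\alpha,\xi)| \le y_{t-1} \le 2y_{t-1}$, which is the claim (the factor of $2$ is loose but sufficient).

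For part (ii), I would first symmetrize the double sum to recognize it as a variance: by a direct calculation,
\[
E_{Q\otimes Q}\bigl[K(K-J)\bigr] = E_Q[K^2] - (E_Q[K])^2 = \mathrm{Var}_Q(K).
\]
Since $\mathrm{Var}_Q(K) \ge 0$ and $\mathrm{Var}_Q(K) \le E_Q[K^2] \le m_t^2$ (because $K \le m_t$ on the support of $Q$), and since the quadratic function $\alpha(1-\alpha)$ satisfies $0 \le \alpha(1-\alpha) \le 1/4$ on $(0,1)$, I obtain
\[
-\tfrac{1}{4}y_{t-1} \le -\alpha(1-\alpha)y_{t-1} \le \dot s_t(\alpha,\xi) \le \mathrm{Var}_Q(K) \le m_t^2,
\]
which is exactly the required two-sided bound.

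No step here is technically difficult; the only mild subtlety is recognizing the double-sum in $\dot s_t$ as the variance of $K$ under the measure $Q$, which is what simultaneously gives a clean upper bound in terms of $m_t^2$ and the non-negativity needed to pair with the $-\alpha(1-\alpha)y_{t-1}$ term for the lower bound.
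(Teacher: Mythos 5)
Your proof is correct and follows essentially the same route as the paper: both bound $k-y_{t-1}\alpha$ pointwise for (i) and, for (ii), use that the double sum of $k(k-j)$ is nonnegative and at most $m_t^2$ while $\alpha(1-\alpha)\le 1/4$. Your explicit identification of $E_{Q\otimes Q}[K(K-J)]$ as $\mathrm{Var}_Q(K)$ is a nice touch that makes transparent the nonnegativity step the paper only asserts implicitly when lower-bounding the numerator.
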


\begin{proof}

Assumption \ref{assumption1} implies that $p_{kt}(\alpha,\xi)>0$ with probability 1 for any $\alpha\in (0,1)$ and $\xi \in \Xi$. This ensures that  $s_t(\alpha,\xi)$ and $\dot s_t(\alpha,\xi)$ are well defined as their denominator, see expressions (\ref{first_der}) and (\ref{second_der}), is almost surely larger then zero for any $\alpha\in (0,1)$ and $\xi \in \Xi$.

To show that (i) is satisfied, we note that 
$$\left|s_t(\alpha, \xi)\right| \le \left(\sum_{k=0}^{m_t}p_{kt}(\alpha,\xi)\right)^{-1}\left(\sum_{k=0}^{m_t} p_{kt}(\alpha,\xi)(k+y_{t-1}\alpha) \right)\le (1+\alpha)y_{t-1},$$
therefore (i) immediately holds true as $\alpha \in (0,1)$.

As concerns (ii), taking into account that $y_t\ge 0$ almost surely,~we obtain that the  numerator of expression (\ref{second_der}) is smaller or equal than
$$\left(\sum_{j=0}^{m_t}\sum_{k=0}^{m_t}p_{kt}(\alpha,\xi)p_{jt}(\alpha,\xi) k(k-j)\right)  \le \left(\sum_{j=0}^{m_t}\sum_{k=0}^{m_t}p_{kt}(\alpha,\xi)p_{jt}(\alpha,\xi)\right) m_t^2,$$
therefore it follows immediately that $\dot s_t(\alpha, \xi) \le m_t^2$. Similarly, we obtain that the numerator of (\ref{second_der}) is larger or equal than 
$$\left(\sum_{j=0}^{m_t}\sum_{k=0}^{m_t}p_{kt}(\alpha,\xi)p_{jt}(\alpha,\xi)\right) (-\alpha(1-\alpha)y_{t-1}),$$
therefore  $ \dot s_t(\alpha, \xi) \ge - y_{t-1}/4 $ as $\alpha \in (0,1)$ and, as a result,   it follows that (ii) is satisfied. 
\end{proof}

\begin{lemma}
\label{lemma2}
Let the conditions of Proposition \ref{proposition1} hold, then $E\|\logit \tilde \alpha_t(\theta)\|_{\Theta}^2<\infty$.
\end{lemma}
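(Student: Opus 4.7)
Plan: I would dominate the stationary functional $\|\logit\tilde\alpha_t\|_\Theta$ by an auxiliary linear AR(1)-type process driven by $\{y_t\}$ whose $L^2$-norm is finite. The pivotal ingredient is the uniform score bound from Lemma \ref{lemma1}(i), namely $|s_t(\alpha,\xi)|\le 2y_{t-1}$ for every $\alpha\in(0,1)$, $\xi\in\Xi$. Let $\bar\omega=\sup_\Theta|\omega|$, $\bar\tau=\sup_\Theta|\tau|$, and $\bar\beta=\sup_\Theta|\beta|$; because $\Theta\subset \mathbb{R}\times(-1,1)\times\mathbb{R}\times\Xi$ is compact, the map $\theta\mapsto|\beta|$ attains its supremum at some point with $|\beta|<1$, so $\bar\beta<1$ and the three constants are finite. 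Applying these bounds to the SRE in \eqref{m2} for $\tilde\alpha_t$ and taking suprema over $\theta\in\Theta$ gives
\begin{equation*}
V_{t+1}\le \bar\omega+\bar\beta V_t+2\bar\tau y_{t-1},\qquad V_t:=\|\logit\tilde\alpha_t\|_\Theta.
\end{equation*}

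Iterating this inequality $n$ times backwards yields
\begin{equation*}
V_t\;\le\;\frac{\bar\omega}{1-\bar\beta}+\bar\beta^n V_{t-n}+2\bar\tau\sum_{k=0}^{n-1}\bar\beta^{k} y_{t-2-k}.
\end{equation*}
I would then pass to the limit $n\to\infty$ to kill the middle term. Proposition \ref{proposition1} guarantees that $\tilde\alpha_t(\cdot)$ lives in $\mathbb{C}(\Theta,\mathbb{R})$ almost surely, hence $V_t<\infty$ a.s.; by the stationarity of $\{\tilde\alpha_t\}$, $V_{t-n}\stackrel{d}{=}V_0$ for every $n$, so for any $\varepsilon>0$, $P(\bar\beta^n V_{t-n}>\varepsilon)=P(V_0>\varepsilon\bar\beta^{-n})\to0$, i.e.~$\bar\beta^n V_{t-n}\xrightarrow{\Pr}0$. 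Extracting a subsequence $n_k\uparrow\infty$ along which the convergence is almost sure (Borel--Cantelli), and noting that the deterministic sum $\sum_{k=0}^{n_k-1}\bar\beta^k y_{t-2-k}$ converges a.s.~(nonnegative terms with summable expectation), we obtain
\begin{equation*}
V_t\;\le\;W_t:=\frac{\bar\omega}{1-\bar\beta}+2\bar\tau\sum_{k=0}^{\infty}\bar\beta^k y_{t-2-k}\quad\text{a.s.}
\end{equation*}

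Finally, the $L^2$ bound on $W_t$ follows from Minkowski's inequality and the stationarity of $\{y_t\}$ together with the moment condition $Ey_0^2<\infty$:
\begin{equation*}
\left(EW_t^2\right)^{1/2}\;\le\;\frac{\bar\omega}{1-\bar\beta}+2\bar\tau\sum_{k=0}^{\infty}\bar\beta^k\left(Ey_0^2\right)^{1/2}=\frac{\bar\omega+2\bar\tau(Ey_0^2)^{1/2}}{1-\bar\beta}<\infty,
\end{equation*}
so $E\|\logit\tilde\alpha_t\|_\Theta^2=EV_t^2\le EW_t^2<\infty$, which is the conclusion.

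The delicate step I expect to be the main obstacle is the limit passage in $n$: we have only a probabilistic bound on $\bar\beta^n V_{t-n}$ and no a priori moment control on $V$, which is precisely what we wish to prove. The subsequence-plus-Borel--Cantelli trick circumvents this circularity by converting the in-probability contraction into an almost-sure one along a suitable subsequence, which suffices because the inequality holds for every $n$ simultaneously on a full-measure event.
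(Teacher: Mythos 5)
Your proof is correct, and it rests on exactly the same key ingredients as the paper's: the uniform score bound $|s_t(\alpha,\xi)|\le 2y_{t-1}$ from Lemma \ref{lemma1}(i), the compactness of $\Theta$ giving $\bar\beta<1$, and domination of $\|\logit\tilde\alpha_t\|_\Theta$ by the same $\mathrm{MA}(\infty)$ majorant $\bar\omega/(1-\bar\beta)+2\bar\tau\sum_k\bar\beta^k y_{t-2-k}$, whose square-integrability follows from $Ey_t^2<\infty$ via Minkowski. The only place you genuinely diverge from the paper is in how the remainder term is dispatched: the paper introduces the initialized auxiliary recursion $\hat v_{t+1}=\omega_u+\beta_u\hat v_t+2\tau_u y_t$, invokes Bougerol's theorem a second time to get $\hat v_t\to\tilde v_t$ a.s., bounds the \emph{filtered} sequence by $\hat v_t$, and then transfers the bound to the stationary limit by the observation that a stationary sequence which is eventually a.s.\ below $\tilde v_t+1$ is below it for every $t$; you instead iterate the inequality directly on the stationary solution and kill $\bar\beta^n V_{t-n}$ by stationarity plus convergence in probability plus an a.s.\ subsequence, exploiting that the countable family of inequalities holds on a common full-measure event. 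Your device is self-contained and avoids the second appeal to Bougerol, at the cost of the (correctly handled) circularity concern you flag; the paper's route is slightly longer but reuses machinery already set up in Proposition \ref{proposition1}. Two cosmetic slips that do not affect validity: the sum $\sum_{k}\bar\beta^k y_{t-2-k}$ is random, not ``deterministic,'' and its a.s.\ convergence follows from monotonicity together with finiteness of its expectation (Tonelli), which is what you in fact argue.
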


\begin{proof}
The lemma is proved by showing  that there exists a stationary and ergodic sequence  $\{\tilde\nu_t\}_{t \in \mathbb{Z}}$ such that $E\tilde \nu_t^2<\infty$  and that $\|\logit \tilde \alpha_t\|_\Theta<(\tilde \nu_t+1)$ with probability 1. Then, it is immediate to conclude that $E\|\logit \tilde \alpha_t\|_\Theta^2<\infty$.

First, we define the sequence $\{\hat v_t\}_{t\in \mathbb{N}}$ through the following stochastic recurrence equation
$$\hat v_{t+1} = \omega_u+\beta_u \hat v_t+2\tau_u y_{t},\; t\in \mathbb{N},$$
 which is initialized at $\hat v_0=\omega_u/(1-\beta_u)$ and where $\omega_u=\sup_{\theta\in\Theta}|\omega|$, $\beta_u=\sup_{\theta\in\Theta}|\beta|$ and $\tau_u=\sup_{\theta\in\Theta}|\tau|$. Considering that $\beta_u<1$ from the specification of $\Theta$ and that $\{y_t\}_{t\in\mathbb{Z}}$ is stationary and ergodic,  an application of Theorem 3.1 of \cite{Bougerol1993} yields that $|\hat v_{t} -\tilde v_t|\xrightarrow{a.s.}0$ as $t$ goes to infinity, where $\{\tilde v_t\}_{t \in \mathbb{N}}$ is a stationary and ergodic sequence that admits the following representation
 \begin{align*}
\tilde v_t=\omega_u/(1-\beta_u)+2\tau_u\sum_{k=1}^{\infty}\beta_u^k y_{t-k}.
 \end{align*}
 From this expression, it is straightforward to obtain that $Ey_{t}^2<\infty$, together with $\beta_u<1$, entails $E \tilde v_t^2<\infty$.

 In the following, we show that  $\|\logit \tilde \alpha_t\|_\Theta<(\tilde \nu_t+1)$ with probability 1. Without loss of generality we can  assume that the filter $\{\logit \hat \alpha_t(\theta)\}_{t \in \mathbb{N}}$ is initialized at $\hat \alpha_0(\theta)=\omega/(1-\beta)$.   Now, taking into account  that $\sup_{\theta \in \Theta}|s_t(\alpha,\xi)|<2 y_{t-1}$ a.s.~for any $\alpha\in (0,1)$ by Lemma \ref{lemma1}, it follows immediately that $\|\logit\hat\alpha_t\|_\Theta\le\hat v_t$ with probability 1 for any $t \in \mathbb{N}$. Therefore, we have that for a large enough $t\in\mathbb{N}$ with probability 1
 \begin{align*}
\|\logit \tilde \alpha_t \|_\Theta- \tilde v_t -1& \le \|\logit \hat \alpha_t \|_\Theta- \hat  v_t -1+\|\logit \tilde \alpha_t -\logit \hat \alpha_t \|_\Theta+|\tilde  v_t- \hat  v_t|<0,
  \end{align*}
as $\|\logit \tilde \alpha_t -\logit \hat \alpha_t \|_\Theta$ and $|\tilde  v_t- \hat  v_t|$ go to zero almost surely. As a result, given the stationarity of $\{\|\logit \tilde \alpha_t \|_\Theta- \tilde v_t\}_{t \in \mathbb{Z}}$ we infer that $\|\logit \tilde \alpha_t \|_\Theta<(\tilde v_t+1)$ with probability 1 for any $t \in \mathbb{Z}$. This concludes the proof.
\end{proof}

\begin{lemma}
\label{lemma4}
Let the conditions of Theorem \ref{theorem2} hold. Then, for any $x\in \mathbb{N}$ there exists a stationary sequence of random variables $\{\eta_t\}_{t\in\mathbb{N}}$ and a constant $C_x>0$ such that almost surely
\begin{description}
\item[(i)] $ \sup_{(\alpha,\theta)\in (0,1) \times \Theta}\left|\frac{\partial p(x|y_{t-1},\alpha,\xi)}{\partial \logit \alpha}\right|\le \eta_t.$
\item[(ii)] $\sup_{(\alpha,\theta)\in (0,1) \times \Theta}\left\|\frac{\partial p(x|y_{t-1},\alpha,\xi)}{\partial \xi}\right\|_1\le C_x.$
\end{description}
\end{lemma}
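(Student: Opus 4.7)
The plan is to differentiate the convolution representation $p(x|y_{t-1},\alpha,\xi)=\sum_{k=0}^{\min\{x, y_{t-1}\}} p_b(k, y_{t-1}, \alpha)\, p_e(x-k,\xi)$ term by term and bound each piece uniformly over $(\alpha,\theta)\in(0,1)\times\Theta$. The two statements then follow from standard Binomial-score manipulations for part (i) and from the compactness of the parameter set combined with Assumption \ref{assumption2.5} for part (ii).

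For (i), using $d\alpha/d\logit \alpha=\alpha(1-\alpha)$, a direct computation yields the familiar Binomial-score identity $\partial p_b(k,y_{t-1},\alpha)/\partial \logit \alpha=p_b(k,y_{t-1},\alpha)(k-y_{t-1}\alpha)$. Substituting this into the convolution and bounding $|k-y_{t-1}\alpha|\le y_{t-1}$ (since $k\le y_{t-1}$ and $\alpha\in(0,1)$), together with $p_e(\cdot,\xi)\le 1$ and $\sum_k p_b(k,y_{t-1},\alpha)\le 1$, shows that $\sup_{(\alpha,\theta)\in(0,1)\times\Theta}\left|\partial p(x|y_{t-1},\alpha,\xi)/\partial \logit \alpha\right|\le y_{t-1}$. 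The stationary sequence $\eta_t:=y_{t-1}$ then dominates, as required.

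For (ii), only the factor $p_e$ depends on $\xi$, so $\partial p(x|y_{t-1},\alpha,\xi)/\partial \xi=\sum_{k=0}^{\min\{x,y_{t-1}\}} p_b(k,y_{t-1},\alpha)\,\partial p_e(x-k,\xi)/\partial \xi$. By Assumption \ref{assumption2.5}, the map $\xi\mapsto \|\partial p_e(j,\xi)/\partial \xi\|_1$ is continuous on $\Xi$ for every $j\in\mathbb{N}$. Because $\Theta$ is compact, its projection onto the $\xi$-coordinate is a compact subset of $\Xi$ on which this continuous map attains a finite maximum $\bar C_j$. Setting $C_x:=\max_{0\le j\le x}\bar C_j$ and again using $\sum_k p_b(k,y_{t-1},\alpha)\le 1$ delivers the uniform bound $\sup_{(\alpha,\theta)\in(0,1)\times\Theta}\|\partial p(x|y_{t-1},\alpha,\xi)/\partial \xi\|_1\le C_x$.

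The main (and essentially the only) non-routine step is extracting the compact $\xi$-domain from $\Theta$ so as to invoke the extreme value theorem in part (ii); once this observation is made, both parts reduce to elementary Binomial calculus together with the stationarity of $\{y_t\}_{t\in\mathbb{Z}}$, so I do not anticipate any serious obstacle.
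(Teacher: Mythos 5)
Your proof is correct and follows essentially the same route as the paper's: differentiate the convolution term by term, use the Binomial score identity $\partial p_b/\partial\logit\alpha = p_b\,(k-y_{t-1}\alpha)$ for (i), and continuity of $\xi\mapsto \partial p_e/\partial\xi$ plus compactness of the $\xi$-projection of $\Theta$ for (ii). The only differences are that your bounds are slightly tighter (you obtain $\eta_t=y_{t-1}$ where the paper settles for $2(1+y_{t-1})y_{t-1}$, and a maximum rather than a sum for $C_x$), which is immaterial to the lemma.
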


\begin{proof}
First we show that (i) holds true. From elementary calculus, we obtain that 
$$\frac{\partial p(x|y_{t-1},\alpha,\xi)}{\partial \logit \alpha}=\sum_{k=0}^{m_{xt}}p_{kt}(x,\alpha,\xi) (k-\alpha y_{t-1}),$$
where $m_{x t}=\min(x,y_{t-1})$ and 
\begin{align*}
p_{kt}(x,\alpha,\xi)= {y_{t-1}\choose k}  \alpha^k (1- \alpha)^{y_{t-1}-k}p_e(x-k, \xi).
\end{align*}
As a result, taking into account that $0\le p_{kt}(x,\alpha,\xi) \le 1$ with probability 1 for any $(x,\alpha,\xi)\in \mathbb{N}\times (0,1)\times \Xi$, it follows that
\begin{align*}
\left|\frac{\partial p(x|y_{t-1},\alpha,\xi)}{\partial \logit \alpha}\right|&\le \sum_{k=0}^{m_{xt}}p_{kt}(x,\alpha,\xi) (k+y_{t-1})\le \sum_{k=0}^{y_{t-1}}(k+y_{t-1})\le 2(1+y_{t-1})y_{t-1}.
\end{align*}
Therefore, the result (i) is proved setting $\eta_t=2(1+y_{t-1})y_{t-1}$ and recalling that $\{y_t\}_{t\in\mathbb{Z}}$ is stationary and ergodic and thus $\{\eta_t\}_{t\in\mathbb{Z}}$  is stationary and ergodic  as well.

As concerns (ii),  we have that
$$\frac{\partial p(x|y_{t-1},\alpha,\xi)}{\partial \xi}=\sum_{k=0}^{m_{xt}}{y_{t-1}\choose k}  \alpha^k (1- \alpha)^{y_{t-1}-k} \frac{\partial p_e(x-k, \xi)}{\partial \xi}.$$
As a result, we obtain that the following inequalities are satisfied almost surely
\begin{align*}
\left\|\frac{\partial p(x|y_{t-1},\alpha,\xi)}{\partial \logit \alpha}\right\|_1&\le \sum_{k=0}^{m_{xt}}{y_{t-1}\choose k}  \alpha^k (1- \alpha)^{y_{t-1}-k}\left\| \frac{\partial p_e(x-k, \xi)}{\partial \xi}\right\|_1\le\sum_{k=0}^{x}\left\| \frac{\partial p_e(x-k, \xi)}{\partial \xi}\right\|_1.
\end{align*}
Therefore, from the continuity of the  derivative provided by Assumption \ref{assumption2.5} and the compactness of $\Theta$, we obtain that for any given $x-k\in\mathbb{N}$ there is a constant $C_{kx}>0$ such that 
$$\sup_{\theta\in\Theta}\left\|\frac{\partial p_e(x-k, \xi)}{\partial \xi} \right\|_1\le C_{kx}.$$
This shows that the  result in (ii) holds as  $C_x=\sum_{k=0}^x C_{kx}<\infty$.
\end{proof}

\bibliographystyle{apalike}
\bibliography{references}

\end{document}